\newtheorem{theorem}{Theorem}
\newcommand{\se}[1]{$\scriptstyle(#1)$}
\newcommand{\best}[1]{\underline{#1}}
\newcolumntype{H}{>{\setbox0=\hbox\bgroup}c<{\egroup}@{}}
\begin{document}
\title{\Large\bf Nearest Neighbor Classification Based on Imbalanced Data:\\ A Statistical Approach}

\author[1]{Anvit Garg}
\author[2]{Anil K. Ghosh}
\author[3]{Soham Sarkar}

\affil[1]{{Department of Statistics, Harvard University, \protect\\ One Oxford Street, Cambridge - 02138, MA, USA. Email: \href{mailto:anvitgarg@fas.harvard.edu}{anvitgarg@fas.harvard.edu}}\vspace{0.2in}} 
\affil[2]{{Theoretical Statistics and Mathematics Unit, Indian Statistical Institute, \protect\\ 203, B.\ T.\ Road, Kolkata 700108, India. Email: \href{mailto:akghosh@isical.ac.in}{akghosh@isical.ac.in}}\vspace{0.2in}}
\affil[3]{{Theoretical Statistics and Mathematics Unit, Indian Statistical Institute, Delhi Centre, \protect\\ 7, S.\ J.\ S.\ Sansanwal Marg, New Delhi 110016, India. Email: \href{mailto:sohamsarkar@isid.ac.in}{sohamsarkar@isid.ac.in}}}

\date{}

\maketitle

\markboth{Garg \MakeLowercase{\textit{et al.}}:}{Nearest Neighbor Classification Based on Imbalanced Data}

\begin{abstract}
When the competing classes in a classification problem are not of comparable size, many popular classifiers exhibit a bias towards larger classes, and the nearest neighbor classifier is no exception. To take care of this problem, we develop a statistical method for nearest neighbor classification based on such imbalanced data sets. First, we construct a classifier for the binary classification problem and then extend it for classification problems involving more than two classes. Unlike the existing oversampling or undersampling methods, our proposed classifiers do not need to generate any pseudo observations or remove any existing observations, hence the results are exactly reproducible. We establish the Bayes risk consistency of these classifiers under appropriate regularity conditions. Their superior performance over the existing methods is amply demonstrated by analyzing several benchmark data sets. 

\smallskip
\noindent
{\bf Keywords:} $F_1$ score, negative binomial distribution, $p$-value, precision and recall, statistical hypo-thesis testing.
\end{abstract}

\section{Introduction}
\label{sec:intro}

Nearest neighbor classifier \citep[see][]{cover1967,friedman2001,duda2000} is arguably the simplest and most popular nonparametric classifier in statistics and machine learning literature. It classifies a test case ${\bf x}$ to the class which has the maximum representation of training data points in a neighborhood of ${\bf x}$. So, when the numbers of training sample observations from the competing classes are not comparable, it shows a tendency to classify more observations in favor of larger classes. Many other classifiers (e.g., classification tree, random forest, support vector machines, artificial neural networks) also have similar problems when the training sample is imbalanced.

One possible way of dealing with such  imbalanced datasets is to assign different weights or costs for observations belonging to different classes \citep[e.g.,][]{huang2005,dubey2013,zang2016}. But these choices of weights are somewhat adhoc, and the performance of the resulting classifier may depend heavily on the choice. Another popular approach in the machine learning literature is to balance the training sample \citep{domingos1999}, either by under-sampling from the majority class \citep[e.g.,][]{hart1968,tomek1976b,wilson1972,tomek1976a,hart1968,kubat1997,laurikkala2001,smith2014,mani2003,mishra2017,dittman2014} or by oversampling from the minority class \citep[e.g.,][]{chawla2002,han2005,he2008,nguyen2011,menardi2014,last2017,das2020}. The under-sampling approach removes some data points from the majority class, but this can also lead to the removal of important observations and negatively affect the fit of the majority class. The oversampling techniques, on the other hand, synthetically generate observations from the minority class.  Some of these methods need to generate data every time we want to classify a new observation, and some of them need a higher memory space to store all over-sampled points. Clearly, these are not desirable for large datasets, especially when the proportion of observations in the minority class is small. Further, two different persons using the same oversampling method on the same data set may get widely different results due to the randomness involved in sample generation.

In this article, we take a statistical approach towards nearest neighbor classification based on imbalanced data. Our method does not need any adhoc weight or cost function for its implementation. Also unlike the data balancing methods, it does not need to delete some observations from the majority class or generate additional observations from the minority class.  This method is based on a probabilistic model, and the results are exactly reproducible. In the next section, we describe our method for binary classification. We establish its consistency under appropriate regularity conditions, and analyze some simulated and real datasets to compare its performance with the usual nearest neighbor classifier, a weighted nearest neighbor classifier, and various data-balancing algorithms. In Section~\ref{sec:multi-class}, we extend our method for classification problems involving more than two classes. Consistency of the resulting classifiers is derived and some real data sets are analyzed to investigate their empirical performance. Finally, Section~\ref{sec:concluding_remarks} provides a brief summary of the work and some concluding remarks. 

\section{The proposed binary classifier}
\label{sec:method}

One major problem with the nearest neighbor classification based on imbalanced data is the sparsity of the minority class observations. Sometimes, one may not get any training data point from the minority class in a neighborhood of a query point even if the true posterior probability of that class is high at that location. Assignments of different weights or costs to observations belonging to different classes are not of much help in such situations. To take care of this problem, unlike the usual nearest neighbor classifier, here we do not consider the same number of neighbors for all test cases. For a test point ${\bf x}$, we keep on considering its neighbors one by one according to their distances until we get a fixed number of neighbors ($k$, say) from the minority class. Let $p({\bf x})$ be the probability (which can be assumed to be constant over a small neighborhood of ${\bf x}$) that a randomly chosen neighbor of ${\bf x}$ comes from the minority class. If $N_k({\bf x})$ denotes the minimum number of neighbors needed to get $k$ neighbors from the minority class, one can see that $N_k({\bf x})$ follows a negative binomial distribution \citep[e.g.,][]{johnson2005} with the probability mass function (p.m.f.) given by
\[
P(N_k({\bf x})=n)=\binom{n-1}{k-1} (p({\bf x}))^k (1-p({\bf x}))^{n-k},
\]
for $n=k,k+1,\ldots$ Note that if the majority class and the minority class have identical probability distributions, and there are $n_1$ and $n_2$ observations from these two classes, respectively, we have $p({\bf x})=n_2/(n_1+n_2)$ ($=p_0$, say) for all ${\bf x}$. In such a case, the p.m.f.\ is given by
\[
f_k(n)=\binom{n-1}{k-1} p_0^k(1-p_0)^{n-k}.
\]
The random variable $N_k({\bf x})$ is supposed to take a small (respectively, large) value if $p({\bf x})$ is large (respectively, small). If $N_0$ is an independent random variable with p.m.f. $f_k$, the probability $P(N_0< N_k({\bf x}))=\sum_{n<N_k({\bf x})}f_k(n)$ can be viewed as the $p$-value for testing the null hypothesis $H_0:p({\bf x})=p_0$ against the alternative hypothesis $H_1:p({\bf x})>p_0$ \citep[see, e.g.,][]{casella2021}. Clearly, lower $p$-values indicate stronger evidence in favour of the minority class. Similarly, $P(N_0> N_k({\bf x}))=\sum_{n>N_k({\bf x})}f_k(n)$ can be viewed as the $p$-value for testing $H_0:p({\bf x})=p_0$ against  $H_1:p({\bf x})<p_0$, and lower values of it indicate stronger evidence in favor of the majority class. Because of the discrete nature of the random variable $N_k({\bf x})$, here we make a slight adjustment and use 
\[
e_k{(\bf x)}=\sum_{n<N_k({\bf x})}f_k(n) +\frac{1}{2}f_k(N_k({\bf x}))
\]
and $1-e_k({\bf x})=\sum_{n>N_k({\bf x})}f_k(n) +\frac{1}{2}f_k(N_k({\bf x}))$ as adjusted $p$-values for these two cases. Under $H_0:p({\bf x})=p_0$, $e_k({\bf x})$ follows a distribution symmetric about $1/2$ (i.e., $e_k({\bf x})$ and $1-e_k({\bf x})$ have the same distribution). On the other hand, $e_k({\bf x})>1/2$ (respectively, $e_k({\bf x})<1/2$) gives an evidence in favour of the majority (respectively, minority) class, and a larger value of $|e_k({\bf x})-1/2|$ indicates stronger evidence. We compute $e_k({\bf x})$ for $k=1,2,\ldots,k_{\max}$ ($k_{\max}$ is a user specified value), and consider ${\cal E}_1= \max_{k \le k_{\max}} e_k({\bf x})$ as the strongest evidence in favour of the majority class (Class-1, say). Similarly,  ${\cal E}_2= \max_{k \le k_{\max}} (1-e_k({\bf x}))=1-\min_{k \le k_{\max}} e_k({\bf x})$ can be viewed as the strongest evidence in favour of the minority class (Class-2, say). We classify the observation ${\bf x}$ to Class-1 (respectively, Class-2)  if ${\cal E}_1 >{\cal E}_2$ (respectively, ${\cal E}_1 <{\cal E}_2$). The procedure is summarized below in Algorithm~\ref{algo:nn_imbalanced_binary}. Note that unlike the data balancing algorithms, no randomness is involved in this proposed method, and the results are completely reproducible.

\begin{algorithm}[h]
\caption{Proposed method for binary classification\label{algo:nn_imbalanced_binary}}
	\begin{algorithmic}[1]
		\renewcommand{\algorithmicrequire}{\textbf{Input:}}
		\renewcommand{\algorithmicensure}{\textbf{Output:}}
		\REQUIRE ${\bf x}$ -- Observation to be classified. \\ $\{{\bf x}_{11},\ldots,{\bf x}_{1n_1}$\}: Training data from majority class.\\ 
		$\{{\bf x}_{21},\ldots,{\bf x}_{2n_2}$\}: Training data from minority class.
		\ENSURE $\delta({\bf x})$: Predicted class label of ${\bf x}$.  
		\\ 
		\STATE  Sort the $n_1+n_2$ training data points in increasing order of distances from ${\bf x}$.\\
		\textit{Initialization} : $e_{\min}=e_{\max}=0.5$\\
		\FOR {$k=1$ to $k_{max}$}
		\STATE Compute $n_0$, the minimum number of neighbors of ${\bf x}$ needed to get $k$ neighbors from the minority class.
		\STATE Calculate $e_k=\sum_{n<n_0}f_k(n) +\frac{1}{2}f_k(n_0)$.
		\IF {($e_k<e_{\min}$)}
		\STATE $e_{\min} = e_k$.
		\ENDIF
		\IF {($e_k>e_{\max}$)} 
		\STATE $e_{\max} = e_k$.
		\ENDIF
		\ENDFOR
		\STATE Compute ${\cal E}_1=e_{\max}$ and ${\cal E}_2=1-e_{\min}$.  
		\RETURN $\delta({\bf x})=1+{\mathbb I}_{\{{\cal E}_1<{\cal E}_2\}}$
	\end{algorithmic}
\end{algorithm}
	
\subsection{Consistency}
\label{sec:binary_consistency}

Now, we investigate the large sample behavior of our proposed classifier. For this investigation, we assume that the density functions of the competing classes are continuous, and sample proportions of the two classes are asymptotically non-negligible. Our method needs the value of $k_{\max}$ to be specified. We assume that $k_{\max}$ grows with $n$ in such a way that $k_{\max}/n \rightarrow 0$ as $n \to \infty$. This ensures that the neighborhood around any query point ${\bf x}$ shrinks as $n$ increases so that we can capture the local behavior of the underlying densities. These assumptions are quite common in the literature of nearest neighbor methods \citep[see, e.g.,][]{loftsgaarden1965,cover1967}. 
Under these assumptions, we have the following result.

\begin{theorem}\label{thm:binary_consistency}
Suppose that the two competing classes have continuous density functions $f_1$ and $f_2$. Also assume that as $n \rightarrow \infty$, (i) $n_i/n$ converges to some $\pi_i \in (0,1)$ for $i=1,2$ ($\pi_1\ge \pi_2,~\pi_1+\pi_2=1$)  and (ii) $k_{\max} \rightarrow \infty$ in such a way that $k_{\max}/n \rightarrow 0$. If $f_i({\bf x})>f_j({\bf x})$ ($i,j=1,2,~i\neq j$), then the proposed method classifies the observation ${\bf x}$ to the $i$-th class with probability tending to $1$ as $n \to \infty$.
\end{theorem}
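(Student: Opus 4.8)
The plan is to argue pointwise at the fixed $\mathbf{x}$ and to reduce the whole rule to the sign of a single local quantity. First I would set $g = \pi_1 f_1(\mathbf{x}) + \pi_2 f_2(\mathbf{x})$, which is strictly positive since one density dominates the other, and invoke the standard nearest-neighbour fact that, because $k_{\max}/n \to 0$, the first $m = o(n)$ neighbours of $\mathbf{x}$ fall in a ball of radius tending to $0$; by continuity of $f_1, f_2$ the chance that such a neighbour is from the minority class converges to $p(\mathbf{x}) = \pi_2 f_2(\mathbf{x})/g$. Since $p_0 = n_2/(n_1+n_2) \to \pi_2$, a one-line computation gives $p(\mathbf{x}) - p_0 \to \pi_1\pi_2\big(f_2(\mathbf{x}) - f_1(\mathbf{x})\big)/g$, so that $p(\mathbf{x}) \gtrless p_0 \iff f_2(\mathbf{x}) \gtrless f_1(\mathbf{x})$; the theorem thus amounts to showing that the rule correctly detects the sign of $p(\mathbf{x}) - p_0$. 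I would then rewrite the decision in a convenient symmetric form: since $\mathcal{E}_1 = \max_k e_k(\mathbf{x}) = 1 - \min_k\big(1 - e_k(\mathbf{x})\big)$ and $\mathcal{E}_2 = 1 - \min_k e_k(\mathbf{x})$, the observation is assigned to Class-2 precisely when $\min_{k \le k_{\max}} e_k(\mathbf{x}) < \min_{k \le k_{\max}}\big(1 - e_k(\mathbf{x})\big)$, i.e.\ when the smallest lower-tail $p$-value beats the smallest upper-tail $p$-value. I treat the case $i = 2$ (so $f_2(\mathbf{x}) > f_1(\mathbf{x})$ and $p(\mathbf{x}) > p_0$), the case $i = 1$ being entirely parallel with the roles of $e_k$ and $1 - e_k$ exchanged.

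For the favoured side, I would use a single large index. Taking $k = k_{\max}$ and using $p(\mathbf{x}) > p_0$, the waiting time $N_{k_{\max}}(\mathbf{x})$ concentrates near $k_{\max}/p(\mathbf{x})$, which is strictly below the reference mean $k_{\max}/p_0$ of $f_{k_{\max}}$. Hence $e_{k_{\max}}(\mathbf{x})$ is a negative-binomial lower-tail probability evaluated a distance $\Theta(k_{\max})$ below the mean, so a Chernoff bound gives $e_{k_{\max}}(\mathbf{x}) \le \exp(-c_1 k_{\max})$ with probability tending to $1$, for an explicit constant $c_1 > 0$. In particular $\min_{k \le k_{\max}} e_k(\mathbf{x}) \le \exp(-c_1 k_{\max}) \to 0$, i.e.\ $\mathcal{E}_2 \to 1$.

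It remains to keep the unfavoured side away from this exponentially small scale, and the clean tool here is stochastic domination. When $p(\mathbf{x}) > p_0$, each ordered neighbour is from the minority class with conditional probability at least $p_0 + \delta$ for some fixed $\delta > 0$ throughout the relevant range, so the waiting time $N_k(\mathbf{x})$ is stochastically dominated by the reference variable $N_0 \sim f_k$, simultaneously for all $k$. Because $1 - e_k(\mathbf{x})$ is a decreasing function of $N_k(\mathbf{x})$, the event $\{1 - e_k(\mathbf{x}) \le \eta\}$ is an upper-tail event for $N_k(\mathbf{x})$, and domination yields $P\big(1 - e_k(\mathbf{x}) \le \eta\big) \le C\eta$ for every $k$; a union bound over $k \le k_{\max}$ at level $\eta = \exp(-c_1 k_{\max})$ then gives $P\big(\min_{k \le k_{\max}}(1 - e_k(\mathbf{x})) \le \exp(-c_1 k_{\max})\big) \le C\,k_{\max}\exp(-c_1 k_{\max}) \to 0$. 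Combined with the previous paragraph this forces $\min_k e_k(\mathbf{x}) < \min_k(1 - e_k(\mathbf{x}))$ with probability tending to $1$, so the rule picks Class-2, as required. I expect the main obstacle to be the rigorous, \emph{uniform} justification of the conditional lower bound $p_i \ge p_0 + \delta$ on the minority probability of the $i$-th neighbour, simultaneously for all $i$ up to order $k_{\max}/p(\mathbf{x}) = o(n)$: this is a uniform law-of-large-numbers/continuity statement about the labels of the ordered neighbours, and it is exactly where the regularity assumptions (continuity of $f_1, f_2$, $n_i/n \to \pi_i$, and $k_{\max}/n \to 0$) must be used. Once it is in place, the stochastic-domination coupling renders the control over the maximum and minimum across $k$ routine, and the case $i = 1$ follows by the symmetric argument.
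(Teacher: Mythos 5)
Your argument is correct in outline, but it takes a genuinely different route from the paper's proof, so a comparison is worthwhile. The paper fixes a deterministic sequence $k_n\to\infty$ with $k_n/n\to 0$, models $N_{k_n}$ as $\mathrm{NegBin}(k_n,\pi_2^{(n)})$, and shows by a mean--variance (Chebyshev-type) computation that $T_n=(N_0-N_{k_n})/k_n$ converges in probability to $\tfrac{\pi_1}{\pi_2}\bigl(1-f_1(\mathbf{x})/f_2(\mathbf{x})\bigr)$, whose sign forces $e_{k_n}\to 0$ or $1$; the favoured evidence then tends to $1$ along any such sequence, and the unfavoured evidence is kept away from the extreme by a subsequence/contradiction argument (bounded $k$ gives $e_k\ge\tfrac12 p_0^k$, divergent $k$ gives $e_k\to 1$). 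You instead work only at $k=k_{\max}$ for the favoured side, upgrading Chebyshev to a Chernoff bound so that the winning $p$-value is exponentially small, and you control the losing side uniformly over $k$ by stochastic domination plus a union bound. What each buys: the paper's route needs only second moments, but its handling of the maximum over $k$ --- extracting bounded and divergent subsequences from a random argmin --- is its least rigorous step; your union bound treats that uniformity cleanly (note $P(1-e_k\le\eta)\le 3\eta$ suffices, since the mid-$p$ atom at the threshold contributes at most $2\eta$, and the domination is only needed up to the threshold $t_\eta=O(k_{\max})$, well inside your $o(n)$ range). The price is the uniform coupling $p_i\ge p_0+\delta$ for the labels of the first $O(k_{\max})$ ordered neighbours, which you rightly flag as the main burden; but this is exactly the local negative-binomial approximation the paper itself asserts without proof (``one can see that $N_{k_n}$ follows the negative binomial distribution with parameters $k_n$ and $\pi_2^{(n)}$''), and it follows from $r_{k_{\max}}\stackrel{P}{\rightarrow}0$ together with continuity of $f_1,f_2$, so your argument is no less rigorous than the published one on this point and is more careful on the uniformity over $k$.
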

\begin{proof} Let ${\bf x}$ be the observation to be classified. First consider a sequence $\{k_n:n\ge 1\}$, where $k_n \rightarrow \infty$ and $k_n/n \rightarrow 0$ as $n \rightarrow \infty$. Define $r_{k_n}$ as the distance between ${\bf x}$ and its $k_n$-{th} nearest neighbor from Class-2. Let $B({\bf x},r_{k_n})$ be the closed ball (neighborhood) of radius $r_{k_n}$ around ${\bf x}$ and $N_{k_n}$ be the total number of observations in $B({\bf x}, r_{k_n})$. Also, define $\pi_1^{(n)}$ (respectively, $\pi_2^{(n)}$) as the probability that a random observation in $B({\bf x},r_{k_n})$ is from the first (respectively, second) class. One can see that $N_{k_n}$ follows the negative binomial distribution \citep[see, e.g.,][]{feller2008} with parameters $k_n$ and $\pi_2^{(n)}$. The evidence in favor of the first class is given by 
\[
e_{k_n}=\sum_{n<N_{k_n}} f_{k_n}(n) +\frac{1}{2}f_{k_n}(N_{k_n}).
\]
Clearly, $e_{k_n}$ is a function of $N_{k_n}$, and it can be viewed as the conditional probability $$\psi(N_{k_n})=P(N_0 < N_{k_n} \mid N_{k_n}) + \frac{1}{2}P(N_0 = N_{k_n}\mid{N_{k_n}}),$$ where $N_0$ is an independent negative binomial random variable with parameters $k_n$ and $p_0={n_2}/{n}$. 

Now, define $T_n=(N_0-N_{k_n})/k_n$. The mean and the variance of $T_n$ are 
$\mu_n = \frac{n_1}{n_2} - \frac{\pi_1^{(n)}}{\pi_2^{(n)}}$ and $\sigma^2_n = \frac{1}{k_n}\left[\frac{n n_1}{n_2^2} - \frac{\pi_1^{(n)}}{\left(\pi_2^{(n)}\right)^2}\right]$, respectively. Under the condition $k_n/n \rightarrow 0$, since $r_{k_n} \stackrel{P}{\rightarrow} 0$  \citep[see][]{loftsgaarden1965}, using the continuity of $f_1$ and $f_2$, 
for $j=1,2$, we have $\left|\pi_j^{(n)} - \frac{\pi_j f_j({\bf x})}{\pi_1 f_1({\bf x}) + \pi_2 f_2({\bf x})}\right| \stackrel{P}{\rightarrow} 0$ as $n\rightarrow \infty$. Again, $n_j/n \rightarrow \pi_j$ for $j=1,2$. So, $\mu_n$ converges to  $\frac{\pi_1}{\pi_2}-\frac{\pi_1 f_1({\bf x})}{\pi_2f_2({\bf x})}$ and $\sigma^2_n$ converges to $0$. This implies  $T_n \xrightarrow{P} \frac{\pi_1}{\pi_2}-\frac{\pi_1 f_1({\bf x})}{\pi_2f_2({\bf x})}$.  Note that this limiting value is positive (respectively, negative) if $f_1({\bf x}) < f_2({\bf x})$ (respectively, $f_1({\bf x}) > f_2({\bf x})$). Therefore,
\[
P(N_0 < N_{k_n})=P(T_n<0) \to 
\begin{cases}
	1 & \text{ if } f_1({\bf x}) > f_2({\bf x})\\
	0 & \text{ if } f_1({\bf x}) < f_2({\bf x}).
\end{cases}
\]
In both of these cases, $P(N_{k_n}=N_0) \rightarrow 0$ as $n\rightarrow \infty$. Note that $P(N_0<N_{k_n})+\frac{1}{2}P(N_0=N_{k_n})=E(\psi(N_{k_n}))$ So, $E(\psi(N_{k_n}))$ converges to $1$ (respectively, $0$) if $f_1({\bf x}) > f_2({\bf x})$ (respectively, $f_1({\bf x}) < f_2({\bf x})$). But $\psi(N_{k_n})$ is bounded between 0 and 1. So, $e_{k_n}=\psi(N_{k_n})$ also converges in probability to $1$ and $0$ in the respective cases. 

From the above discussion, it is clear that if $f_1({\bf x})>f_2({\bf x})$, the strongest evidence in favor of Class-1, ${\cal E}_1 = \max_{k\le k_{max}} e_k\ge e_{k_n}=\psi(N_{k_n})\rightarrow 1$ as $n \rightarrow \infty$. Now, we need to show that ${\cal E}_2=1-\min_{k\le k_{max}} e(k)$, the strongest evidence in favor of Class-2, remains bounded away from $1$ as $n$ diverges to infinity. If possible, assume this is not true. Then, there exists a sequence $\{k^*_n ~: n \ge 1\}$ such that $e_{k_n^*} \rightarrow 0$ as $n \rightarrow \infty$. Clearly, this is not possible if $k_n^{*}$ remains bounded (in that case, $f_{k^*_n}(t)$ remains bounded away from $0$ for all $t$). On the other hand, we have proved that if $k_n^{*}\rightarrow \infty$, $e_{k_n^*} \rightarrow 1$ (note that $k_n^* \le k_{\max}$ and hence $k_n^*/n \rightarrow 0$ as $n \rightarrow \infty$). So, ${\cal E}_2$ remains bounded away from $1$. As a result, we classify ${\bf x}$ to Class-1 with probability tending to one. Similarly, for $f_1({\bf x})<f_2({\bf x})$, we classify ${\bf x}$ to Class-2 with probability converging to $1$ as $n$ increases.
\end{proof}



\subsection{Illustrative examples}
\label{sec:binary_simulation}

We consider some simulated examples to demonstrate the utility of the proposed method. For each example, we consider training and test samples of size 1000. While the training samples have $\alpha$ ($0<\alpha<1/2$) proportion of observations from the minority class, the test samples are considered to have equal number of observations from the two classes. To evaluate the performance of our method, we compute the precision and the recall of the proposed classifier on the test set. Suppose that a classifier induces an allocation matrix as shown in Table~\ref{tab:allocation_matrix} on the test data. Then, precision and recall of the classifier for Class-$i$ ($i=1,2$) are defined as ${\cal P}(i)=n_{ii}/{n_{0i}}$ and ${\cal R}(i)=n_{ii}/{n_{i0}}$, respectively. Often the harmonic mean of precision and recall, called the $F_1$ score, is used as another measure of performance. The $F_1$ score for the Class-$i$ ($i=1,2$) is given by ${\cal F}_1(i)=2n_{ii}/(n_{i0}+n_{0i})$. We compute $\mathcal P(i), \mathcal R(i)$ and $\mathcal F_1(i)$ for $i=1,2$, and define the overall precision, recall and $F_1$ score of the classifier as ${\cal P}=\sum_{i=1}^{2}{\cal P}(i)/2$,  ${\cal R}=\sum_{i=1}^{2}{\cal R}(i)/2$ and ${\cal F}_1=\sum_{i=1}^{2}{\cal F}_1(i)/2$, respectively. If the competing classes have equal number of observations in the test sample, then $\mathcal R = (n_{11} + n_{22})/n_{00}$, which is the accuracy of the classifier. Note that higher value of precision (respectively, recall) followed by lower value of recall (respectively, precision) and lower $F_1$ score indicates sensitivity to imbalanced data sets.

\begin{table}[t]
	\centering
		\caption{Allocation matrix of a classifier.\label{tab:allocation_matrix}}
	\begin{tabular}{c|c|c|c|c|}
	\multicolumn{1}{c}{} & \multicolumn{1}{c}{} &\multicolumn{2}{c}{Predicted class label}& \multicolumn{1}{c}{} \\ \cline{2-5}
	 & & Class-1 & Class-2 & Total \\ \cline{2-5}
	Actual & Class-1 & $n_{11}$ & $n_{12}$ &$n_{10}$\\ \cline{2-5}
	Class label & Class-2 & $n_{21}$ & $n_{22}$ & $n_{20}$\\ \cline{2-5}
		& Total & $n_{01}$ & $n_{02}$ & $n_{00}$ \\ \cline{2-5}
	\end{tabular}
\end{table}
	
For each example, we consider $1000$ simulation runs, and the average values of ${\cal P}$, ${\cal R}$ and ${\cal F}_1$ over these $1000$ trials are computed for the proposed classifier. These average values are reported in Tables~\ref{tab:binary_normal_location} and \ref{tab:binary_normal_scale} along with their corresponding standard errors. For proper evaluation of the proposed method, we compare its performance with the usual $k$-nearest neighbor classifier (NN), a weighted version of $k$-nearest neighbor classifier (Wt.\ NN) and several data balancing algorithms available in the literature. For Wt.\ NN, we put weight $1/n_i$ on each observation from Class-$i$ ($i=1,2$) to circumvent the effect of data imbalance. Among the data balancing algorithms, we used the ones available in the \texttt{imbalanced-learn} package in \texttt{Python} \citep[][also \url{https://imbalanced-learn.org/stable/}]{imblearn}. In particular, among the undersampling methods, we used condensed nearest neighbors \citep[CondensedNN,][]{hart1968}, one-sided selection \citep[OneSidedSelection,][]{hart1968,kubat1997}, edited nearest neighbors \citep[EditedNN,][]{wilson1972}, repeated edited nearest neighbors \citep[RepeatedEditedNN,][]{tomek1976a}, all $k$-NN \citep[AllKNN,][]{tomek1976a}, Tomek's links \citep[TomekLinks,][]{tomek1976b}, neighborhood cleaning rule \citep[NbdCleaning,][]{laurikkala2001}, near miss \citep[NearMiss,][]{mani2003}, Instance Hardness Threshold \citep[IHT,][]{smith2014},  the random undersampler \citep[RandUnd,][]{dittman2014,mishra2017} and cluster centroids \citep[Centroids,][]{lin2017}. For oversampling, we used linear smote \citep[Smote,][]{chawla2002}, borderline smote \citep[Bsmote,][]{han2005}, adaptive synthetic sampling \citep[ADASYN,][]{he2008}, SVM smote \citep[Ssmote][]{nguyen2011}, smote based on smooth bootstrap \citep[Bootstrap,][]{menardi2014} and $k$-means smote \citep[Ksmote,][]{last2017}. In order to keep the comparison concise, here we only report detailed results for RandUnd, Centroids and IHT among the undersampling methods, and Smote, Bsmote and ADASYN among the oversampling methods, since they are the top three performers in the respective class of methods in our numerical studies (see Figure~\ref{fig:ranks_all} in the Appendix). Some of these methods involve some tuning parameters. For NN and Wt.\ NN, we used $5$ nearest neighbors throughout our numerical studies. For the data balancing algorithms, we used the default values of the tuning parameters. For our proposed method, we used $k_{\max}=5$ in all the numerical studies.

\begin{table}[h!]
	\centering
	\small
	\setlength{\tabcolsep}{0.04in} 
	\caption{Precision ($\mathcal P$), recall ($\mathcal R$) and $F_1$-score ($\mathcal F_1$) of different classifiers in the normal location problem. The results are reported for different proportions of minority samples ($\alpha$) in the training set. The reported numbers are in $\%$. The best result in each example is underlined.\label{tab:binary_normal_location}}
\begin{tabular}{|c|c|ccHccccccc|}\hline
  & $\alpha$ & NN & Wt.\ NN & RandOv & RandUnd & IHT & Centroids & Smote & Bsmote & ADASYN & Proposed \\\hline
  & 0.05 & 72.26 \se{.08} & 69.12 \se{.06} & 68.55 \se{.07} & 72.40 \se{.07} & 73.95 \se{.05} & 72.17 \se{.07} & 68.02 \se{.07} & 70.95 \se{.07} & 67.65 \se{.07} & \best{74.34 \se{.05}} \\
  $ \cal P $ & 0.1 & 72.39 \se{.06} & 69.80 \se{.06} & 69.07 \se{.06} & 72.52 \se{.06} & \best{74.50 \se{.05}} & 72.06 \se{.06} & 69.02 \se{.06} & 70.91 \se{.06} & 68.43 \se{.06} & 74.35 \se{.05} \\
  & 0.2 & 72.42 \se{.05} & 70.45 \se{.05} & 69.51 \se{.05} & 72.55 \se{.06} & \best{74.48 \se{.05}} & 71.89 \se{.06} & 70.14 \se{.06} & 70.73 \se{.06} & 69.42 \se{.05} & 74.14 \se{.05} \\
  & 0.4 & 72.58 \se{.05} & 72.58 \se{.05} & 71.44 \se{.05} & 72.55 \se{.05} & \best{74.00 \se{.05}} & 72.28 \se{.05} & 71.67 \se{.05} & 70.84 \se{.05} & 71.06 \se{.05} & 73.59 \se{.05} \\ [2pt]
  \hline
  & 0.05 & 54.33 \se{.06} & 66.98 \se{.06} & 63.84 \se{.06} & 72.22 \se{.07} & 69.74 \se{.07} & 71.85 \se{.07} & 66.71 \se{.07} & 67.48 \se{.07} & 66.61 \se{.07} & \best{74.15 \se{.05}} \\
  $ \cal R $ & 0.1 & 59.60 \se{.06} & 69.67 \se{.06} & 67.67 \se{.06} & 72.42 \se{.06} & \best{74.29 \se{.05}} & 71.82 \se{.06} & 68.32 \se{.06} & 69.54 \se{.06} & 68.05 \se{.06} & 74.23 \se{.05} \\
  & 0.2 & 66.22 \se{.05} & 69.32 \se{.05} & 69.42 \se{.05} & 72.48 \se{.06} & 72.50 \se{.05} & 71.75 \se{.06} & 69.92 \se{.06} & 70.59 \se{.06} & 69.38 \se{.05} & \best{74.07 \se{.05}} \\
  & 0.4 & 71.94 \se{.05} & 71.94 \se{.05} & 71.41 \se{.05} & 72.51 \se{.05} & 73.52 \se{.05} & 72.24 \se{.05} & 71.62 \se{.05} & 70.69 \se{.05} & 70.94 \se{.05} & \best{73.55 \se{.05}} \\ [2pt]
  \hline
  & 0.05 & 42.69 \se{.11} & 66.02 \se{.07} & 61.36 \se{.07} & 72.16 \se{.07} & 68.31 \se{.09} & 71.74 \se{.07} & 66.08 \se{.07} & 66.06 \se{.08} & 66.11 \se{.07} & \best{74.09 \se{.05}} \\
  ${\cal F}_1$ & 0.1 & 52.80 \se{.09} & 69.62 \se{.06} & 67.06 \se{.06} & 72.39 \se{.06} & \best{74.24 \se{.05}} & 71.74 \se{.06} & 68.02 \se{.06} & 69.03 \se{.06} & 67.88 \se{.06} & 74.20 \se{.05} \\
  & 0.2 & 63.68 \se{.07} & 68.89 \se{.06} & 69.39 \se{.05} & 72.46 \se{.06} & 71.93 \se{.06} & 71.70 \se{.06} & 69.84 \se{.06} & 70.53 \se{.06} & 69.36 \se{.05} & \best{74.05 \se{.05}} \\
  & 0.4 & 71.74 \se{.05} & 71.74 \se{.05} & 71.40 \se{.05} & 72.50 \se{.05} & 73.39 \se{.05} & 72.23 \se{.05} & 71.60 \se{.05} & 70.63 \se{.05} & 70.90 \se{.05} & \best{73.54 \se{.05}} \\ [2pt]
\hline
\end{tabular}%
\end{table}

We begin with a location problem involving bivariate normal distributions $N({\bf 0}, {\bf I})$ and $N((1,1)^{\top},{\bf I})$. We treat the second population as the minority class and generate $\alpha (<1/2)$ proportion of training sample observations from there. The results for different values of $\alpha$ are reported in Table~\ref{tab:binary_normal_location}. We can see that the proposed method outperformed all other competitors in all cases. The differences with other methods are higher for small values of $\alpha$. As $\alpha$ increased (i.e., the training sample became more balanced), all methods tend to perform better and the difference became smaller. Note that in this example, the Bayes (oracle) classifier has ${\cal P}={\cal R}={\cal F}_1= 76.025\%$. The performance of our method is close to that even for small values of $\alpha$. For small  $\alpha$, as expected, NN has low precision for the minority class and low recall for the majority class. Consequently, the $F_1$ score of NN is low. Wt.\ NN and some data balancing algorithms, particularly, Bsmote and IHT, have similar problems.

Next, we consider a scale problem involving bivariate normal distributions $N({\bf 0}, {\bf I})$ and $N({\bf 0},2{\bf I})$. Unlike the location problem, here the results depend on the choice of the minority class. We treat both the populations as the minority in turn, and in each case, report the results in Table~\ref{tab:binary_normal_scale} for $\alpha=0.1,0.2$ and $0.4$. Table~\ref{tab:binary_normal_scale} shows that the proposed method has the best overall performance, in terms of the $F_1$ score, when the minority class has higher variance. Centroids has better performance when the majority class has higher variance, but its performance is very poor in the other scenario. Among the other competing methods, the performance of IHT is comparable to the proposed method.

\begin{table}[h!]
	\centering
	\small
	\setlength{\tabcolsep}{0.04in}
	\caption{Precision ($\mathcal P$), recall ($\mathcal R$) and $F_1$-score ($\mathcal F_1$) of different classifiers in the normal scale problems. The results are reported for different proportions of minority samples ($\alpha$) in the training set. The reported numbers are in $\%$. The best result in each example is underlined.\label{tab:binary_normal_scale}}
\begin{tabular}{|c|c|ccHccccccc|}\hline
  & $\alpha$ & NN & Wt.\ NN & RandOv & RandUnd & IHT & Centroids & Smote & Bsmote & ADASYN & Proposed \\\hline
  \multicolumn{12}{|c|}{Majority class: $N(\mathbf 0,\mathrm I)$, Minority class: $N(\mathbf 0, 2\mathrm I)$} \\ \hline
  & 0.1 & \best{64.34 \se{.16}} & 55.71 \se{.06} & 55.98 \se{.07} & 56.58 \se{.07} & 58.96 \se{.07} & 51.16 \se{.08} & 55.08 \se{.06} & 56.81 \se{.07} & 54.86 \se{.06} & 58.24 \se{.07} \\
  $ \cal P $ & 0.2 & \best{62.17 \se{.08}} & 55.49 \se{.06} & 55.84 \se{.06} & 56.90 \se{.06} & 58.12 \se{.06} & 52.18 \se{.07} & 56.03 \se{.06} & 56.64 \se{.06} & 55.59 \se{.06} & 58.32 \se{.06} \\
  & 0.4 & \best{58.45 \se{.06}} & 58.45 \se{.06} & 56.69 \se{.05} & 57.20 \se{.06} & 58.03 \se{.05} & 55.89 \se{.06} & 56.78 \se{.06} & 56.25 \se{.06} & 55.91 \se{.06} & 57.93 \se{.06} \\ [2pt]
  \hline
  & 0.1 & 51.46 \se{.02} & 55.53 \se{.06} & 54.93 \se{.05} & 56.50 \se{.07} & \best{58.25 \se{.07}} & 51.12 \se{.07} & 54.68 \se{.06} & 55.71 \se{.06} & 54.53 \se{.06} & 58.13 \se{.07} \\
  $ \cal R $ & 0.2 & 54.47 \se{.04} & 55.00 \se{.05} & 55.73 \se{.05} & 56.85 \se{.06} & 58.03 \se{.06} & 52.13 \se{.06} & 55.83 \se{.06} & 56.34 \se{.05} & 55.49 \se{.05} & \best{58.26 \se{.06}} \\
  & 0.4 & 57.48 \se{.05} & 57.48 \se{.05} & 56.65 \se{.05} & 57.17 \se{.06} & \best{58.02 \se{.05}} & 55.80 \se{.06} & 56.72 \se{.05} & 56.23 \se{.05} & 55.88 \se{.06} & 57.88 \se{.05} \\ [2pt]
  \hline
  & 0.1 & 37.36 \se{.05} & 55.19 \se{.06} & 52.84 \se{.06} & 56.38 \se{.07} & 57.42 \se{.07} & 50.88 \se{.07} & 53.75 \se{.06} & 53.85 \se{.07} & 53.76 \se{.06} & \best{57.99 \se{.07}} \\
  ${\cal F}_1$ & 0.2 & 45.87 \se{.06} & 53.97 \se{.05} & 55.52 \se{.06} & 56.78 \se{.06} & 57.92 \se{.06} & 51.90 \se{.06} & 55.45 \se{.06} & 55.85 \se{.06} & 55.30 \se{.06} & \best{58.19 \se{.05}} \\
  & 0.4 & 56.23 \se{.06} & 56.23 \se{.06} & 56.57 \se{.05} & 57.13 \se{.06} & \best{57.99 \se{.05}} & 55.63 \se{.06} & 56.62 \se{.05} & 56.20 \se{.05} & 55.84 \se{.06} & 57.82 \se{.05} \\ [2pt]
  \hline
  \multicolumn{12}{|c|}{Majority class: $N(\mathbf 0, 2\mathrm I)$, Minority class: $N(\mathbf 0,\mathrm I)$} \\ \hline
  & 0.1 & 55.20 \se{.21} & 54.99 \se{.06} & 54.32 \se{.06} & 56.49 \se{.07} & 58.76 \se{.07} & \best{61.25 \se{.06}} & 55.05 \se{.06} & 55.60 \se{.06} & 55.00 \se{.06} & 58.71 \se{.06} \\
  $ \cal P $ & 0.2 & 55.52 \se{.09} & 57.03 \se{.06} & 55.22 \se{.05} & 57.13 \se{.06} & 60.84 \se{.06} & \best{61.13 \se{.05}} & 55.67 \se{.05} & 56.06 \se{.06} & 55.57 \se{.06} & 58.84 \se{.05} \\
  & 0.4 & 56.35 \se{.06} & 56.35 \se{.06} & 56.34 \se{.05} & 57.16 \se{.06} & 58.58 \se{.06} & \best{59.25 \se{.05}} & 56.49 \se{.05} & 56.37 \se{.05} & 56.39 \se{.05} & 58.10 \se{.05} \\ [2pt]
  \hline
  & 0.1 & 50.35 \se{.01} & 54.90 \se{.06} & 53.60 \se{.05} & 56.42 \se{.07} & 58.68 \se{.07} & \best{60.06 \se{.05}} & 54.65 \se{.06} & 54.72 \se{.05} & 54.66 \se{.06} & 58.43 \se{.06} \\
  $ \cal R $ & 0.2 & 51.86 \se{.03} & 56.20 \se{.05} & 55.16 \se{.05} & 57.08 \se{.06} & 59.67 \se{.05} & \best{60.26 \se{.05}} & 55.52 \se{.05} & 55.85 \se{.05} & 55.49 \se{.06} & 58.65 \se{.05} \\
  & 0.4 & 55.91 \se{.05} & 55.91 \se{.05} & 56.33 \se{.05} & 57.12 \se{.05} & 58.43 \se{.05} & \best{59.05 \se{.05}} & 56.47 \se{.05} & 56.35 \se{.05} & 56.32 \se{.05} & 58.02 \se{.05} \\ [2pt]
  \hline
  & 0.1 & 35.23 \se{.03} & 54.69 \se{.06} & 51.55 \se{.06} & 56.29 \se{.07} & 58.59 \se{.07} & \best{58.99 \se{.06}} & 53.72 \se{.06} & 52.85 \se{.06} & 53.87 \se{.06} & 58.09 \se{.06} \\
  ${\cal F}_1$ & 0.2 & 42.23 \se{.05} & 54.85 \se{.06} & 55.01 \se{.05} & 57.01 \se{.06} & 58.58 \se{.06} & \best{59.47 \se{.05}} & 55.22 \se{.05} & 55.45 \se{.06} & 55.34 \se{.06} & 58.43 \se{.05} \\
  & 0.4 & 55.13 \se{.05} & 55.13 \se{.05} & 56.31 \se{.05} & 57.08 \se{.05} & 58.24 \se{.05} & \best{58.84 \se{.05}} & 56.45 \se{.05} & 56.33 \se{.05} & 56.21 \se{.05} & 57.93 \se{.05} \\ [2pt]
	\hline
\end{tabular}%
\end{table}%

We carried out our experiments for other distributions as well, but the results were more or less similar. Barring a  few cases, the proposed method outperformed all its competitors considered here. Therefore, to save space, we do not report them. In the case of balanced training sample, while Wt.\ NN and all data balancing algorithms coincide with the usual nearest neighbor classifier, the proposed method leads to a slightly different classifier. In the normal location problem with balanced samples, while NN has average precision, recall and $F_1$ score of $75.11\%$, $75.02\%$ and $75\%$, respectively, those for the proposed method are $75.28\%$, $75.19\%$ and $75.16\%$ in the respective cases. The performances in the balanced normal scale problem are also comparable -- the average precision, recall and $F_1$ score for NN are $60.49\%$, $60.12\%$, $59.78\%$, respectively, compared to a slightly better performance of $62.67\%$, $61.18\%$ and $60.02\%$ by the proposed method.

\subsection{Analysis of Benchmark Datasets}
\label{sec:binary_realdata}

Here, we analyze 14 benchmark data sets for further evaluation of the proposed method. Breast Cancer, Haberman, Predictive Maintenance and Pima Indian data sets are taken from the UCI Machine Learning Repository (\url{https://archive.ics.uci.edu/ml/datasets.php}). The rest of the data sets are available at Kaggle (\url{https://www.kaggle.com/datasets}).  For Asteroids and Loan data sets, we remove all redundant variables before carrying out our experiment. The Fetal Health data set has observations from three classes: `Normal', `Suspect' and `Pathological',. However, the distinction between the last two classes is not very clear. So, we consider it as a two-class problem, where `Suspect' and `Pathological' is considered as one class, and `Normal' as the other class. In most of these data sets, the measurement variables are not of comparable units and scales. So, we standardize each of the measurement variables and work with the standardized data sets. We divide each data set randomly into two groups to form the training and test sets.  We first divide the minority class observations in two groups containing nearly 75\% and 25\% observations. Equal number of observations from the majority class are added to the smaller group to form the test set, while the rest of the majority class observations are added to the larger group to form the training set. For each data set, this random partitioning is done 1000 times, and the average overall performance of different classifiers (computed over these 1000 trials) are reported in Table~\ref{tab:binary_benchmark}.

\begin{table}[h!]
	\centering
	\scriptsize
	\setlength{\tabcolsep}{0.04in}
	\caption{Precision ($\mathcal P$), recall ($\mathcal R$) and $F_1$-score ($\mathcal F_1$) of different classifiers on benchmark data sets involving two classes. Brief description of the datasets are given before providing the results. The reported numbers are in $\%$. The best result in each example is underlined.\label{tab:binary_benchmark}}
	\begin{tabular}{|c|cccccccccccccc|}
\multicolumn{15}{c}{Dataset description} \\ \hline
Dataset & Astro- & Breast & Fetal & Haber- & Loan & NASA & Paris & Pima & Pred- & Rice & Room & Stroke & Water & Wine \\
 & Seismo- & Cancer & Health & man & & & & & Maint & & & & &  \\
 & logy & & & & & & & & & & & & &  \\ \hline 
Dimension & 3 & 30 & 21 & 3 & 11 & 20 & 17 & 8 & 8 & 10 & 5 & 13 & 9 & 11 \\ \hline
Training & 641 & 304 & 1538 & 205 & 4400 & 3744 & 8419 & 433 & 9577 & 7935 & 1450 & 4799 & 1679 & 669 \\
sample size & 216 & 159 & 354 & 61 & 360 & 567 & 949 & 201 & 255 & 6150 & 729 & 187 & 959 & 558 \\ \hline
Imbalance ratio & 0.252 & 0.343 & 0.187 & 0.229 & 0.076 & 0.131 & 0.101 & 0.317 & 0.026 & 0.437 & 0.335 & 0.038 & 0.364 & 0.455 \\ \hline
Test sample &  & & & & & & & & & & & & & \\
size (per class) & 72 & 53 & 117 & 20 & 120 &188 & 316 & 67 & 84 & 2050 & 243 & 62 & 319 & 186 \\ \hline
\end{tabular}

\medskip

\begin{tabular}{|c|HHc|ccHccccccc|}
 \multicolumn{14}{c}{Results} \\ \hline
Dataset  & Train & Test & & NN & Wt.\ NN & RandOv & RandUnd & IHT & Centroids & Smote & Bsmote & ADASYN & Proposed \\ \hline
  Astro- & 641 & 72 & $ \cal P $ & 95.51 \se{.05} & \best{95.94 \se{.05}} & 95.85 \se{.05} & 95.42 \se{.05} & 94.99 \se{.05} & 95.83 \se{.05} & 95.88 \se{.05} & 95.36 \se{.05} & 95.32 \se{.05} & 95.62 \se{.05} \\
  Seismo- & 216 & 72 & $ \cal R $ & 95.38 \se{.05} & \best{95.89 \se{.05}} & 95.80 \se{.05} & 95.33 \se{.05} & 94.75 \se{.06} & 95.74 \se{.05} & 95.82 \se{.05} & 95.27 \se{.05} & 95.21 \se{.05} & 95.53 \se{.05} \\
  logy & & & ${\cal F}_1$ & 95.38 \se{.05} & \best{95.89 \se{.05}} & 95.79 \se{.05} & 95.33 \se{.05} & 94.75 \se{.06} & 95.73 \se{.05} & 95.82 \se{.05} & 95.27 \se{.05} & 95.20 \se{.05} & 95.53 \se{.05} \\ [2pt] \hline
  Breast & 304 & 53 & $ \cal P $ & 95.92 \se{.05} & 96.08 \se{.06} & 95.89 \se{.06} & 95.77 \se{.06} & 94.86 \se{.06} & 95.99 \se{.05} & 96.08 \se{.05} & 94.95 \se{.06} & 94.72 \se{.06} & \best{96.38 \se{.05}} \\
  Cancer & 159 & 53 & $ \cal R $ & 95.64 \se{.06} & 96.00 \se{.06} & 95.79 \se{.06} & 95.67 \se{.06} & 94.71 \se{.06} & 95.87 \se{.06} & 95.99 \se{.06} & 94.82 \se{.06} & 94.53 \se{.06} & \best{96.28 \se{.05}} \\
   & & & ${\cal F}_1$ & 95.63 \se{.06} & 96.00 \se{.06} & 95.79 \se{.06} & 95.67 \se{.06} & 94.71 \se{.06} & 95.86 \se{.06} & 95.98 \se{.06} & 94.82 \se{.06} & 94.53 \se{.06} & \best{96.28 \se{.05}} \\ [2pt] \hline
  Fetal & 1538 & 117 & $ \cal P $ & 87.15 \se{.05} & 91.02 \se{.06} & 90.28 \se{.06} & 89.21 \se{.06} & 89.70 \se{.06} & 89.01 \se{.06} & 90.69 \se{.06} & 90.89 \se{.06} & \best{91.36 \se{.06}} & 91.22 \se{.06} \\
  Health & 354 & 117 & $ \cal R $ & 84.63 \se{.07} & 90.68 \se{.06} & 90.19 \se{.06} & 89.11 \se{.06} & 89.33 \se{.06} & 88.91 \se{.06} & 90.61 \se{.06} & 90.82 \se{.06} & \best{91.27 \se{.06}} & 91.08 \se{.06} \\
   & & & ${\cal F}_1$ & 84.35 \se{.07} & 90.66 \se{.06} & 90.18 \se{.06} & 89.10 \se{.06} & 89.30 \se{.06} & 88.90 \se{.06} & 90.60 \se{.06} & 90.81 \se{.06} & \best{91.26 \se{.06}} & 91.07 \se{.06} \\ [2pt] \hline
  Haber- & 205 & 20 & $ \cal P $ & 60.01 \se{.36} & 59.12 \se{.24} & 61.23 \se{.23} & 58.59 \se{.23} & \best{62.37 \se{.24}} & 52.28 \se{.23} & 60.47 \se{.24} & 60.65 \se{.24} & 60.46 \se{.23} & 62.02 \se{.23} \\
  man & 61 & 20 & $ \cal R $ & 54.25 \se{.15} & 58.02 \se{.21} & 60.79 \se{.22} & 58.27 \se{.23} & \best{61.97 \se{.23}} & 52.20 \se{.22} & 59.93 \se{.23} & 60.05 \se{.23} & 60.09 \se{.23} & 61.62 \se{.22} \\
  & & & ${\cal F}_1$ & 46.94 \se{.19} & 56.79 \se{.22} & 60.37 \se{.23} & 57.89 \se{.23} & \best{61.64 \se{.24}} & 51.78 \se{.22} & 59.42 \se{.23} & 59.49 \se{.23} & 59.71 \se{.23} & 61.28 \se{.23} \\ [2pt] \hline
  Loan & 4400 & 120 & $ \cal P $ & 84.82 \se{.04} & 91.23 \se{.05} & 89.98 \se{.05} & 90.40 \se{.06} & 88.53 \se{.05} & 91.18 \se{.06} & 91.32 \se{.05} & 91.08 \se{.05} & 91.58 \se{.05} & \best{92.61 \se{.05}} \\
  & 360 & 120 & $ \cal R $ & 78.55 \se{.07} & 90.89 \se{.05} & 88.90 \se{.06} & 90.33 \se{.06} & 86.72 \se{.06} & 91.09 \se{.06} & 90.68 \se{.05} & 90.36 \se{.06} & 91.11 \se{.05} & \best{92.54 \se{.05}} \\
  & & & ${\cal F}_1$ & 77.52 \se{.07} & 90.88 \se{.05} & 88.82 \se{.06} & 90.32 \se{.06} & 86.55 \se{.06} & 91.09 \se{.06} & 90.64 \se{.06} & 90.31 \se{.06} & 91.08 \se{.05} & \best{92.54 \se{.05}} \\ [2pt] \hline
  NASA & 3744 & 188 & $ \cal P $ & 80.95 \se{.05} & 87.53 \se{.05} & 86.50 \se{.05} & 86.45 \se{.05} & 85.37 \se{.06} & 87.25 \se{.05} & 86.87 \se{.05} & 86.78 \se{.05} & 87.08 \se{.05} & \best{88.57 \se{.04}} \\
  & 567 & 188 & $ \cal R $ & 74.73 \se{.06} & 87.09 \se{.05} & 86.42 \se{.05} & 85.63 \se{.06} & 85.05 \se{.06} & 86.97 \se{.05} & 86.83 \se{.05} & 86.73 \se{.05} & 87.02 \se{.05} & \best{87.41 \se{.05}} \\
   & & & ${\cal F}_1$ & 73.38 \se{.06} & 87.05 \se{.05} & 86.42 \se{.05} & 85.55 \se{.06} & 85.02 \se{.06} & 86.95 \se{.05} & 86.82 \se{.05} & 86.73 \se{.05} & 87.01 \se{.05} & \best{87.31 \se{.05}} \\ [2pt] \hline
  Paris & 8419 & 316 & $ \cal P $ & 96.74 \se{.02} & 96.65 \se{.02} & 98.09 \se{.02} & 93.19 \se{.03} & \best{99.19 \se{.01}} & 97.91 \se{.02} & 97.81 \se{.02} & 98.11 \se{.02} & 97.90 \se{.02} & 94.61 \se{.02} \\
  & 949 & 316 & $ \cal R $ & 96.51 \se{.02} & 96.41 \se{.02} & 98.05 \se{.02} & 92.12 \se{.03} & \best{99.18 \se{.01}} & 97.82 \se{.02} & 97.72 \se{.02} & 98.09 \se{.02} & 97.81 \se{.02} & 93.95 \se{.03} \\
  & & & ${\cal F}_1$ & 96.51 \se{.02} & 96.40 \se{.02} & 98.05 \se{.02} & 92.07 \se{.03} & \best{99.18 \se{.01}} & 97.82 \se{.02} & 97.72 \se{.02} & 98.09 \se{.02} & 97.81 \se{.02} & 93.93 \se{.03} \\ [2pt] \hline
  Pima & 433 & 67 & $ \cal P $ & 70.90 \se{.11} & 71.98 \se{.11} & 71.08 \se{.11} & 71.10 \se{.11} & \best{74.00 \se{.11}} & 70.04 \se{.12} & 71.82 \se{.11} & 71.40 \se{.11} & 71.13 \se{.11} & 73.38 \se{.11} \\
  & 201 & 67 & $ \cal R $ & 68.03 \se{.1} & 71.84 \se{.11} & 70.90 \se{.11} & 70.94 \se{.11} & 73.09 \se{.11} & 69.76 \se{.11} & 71.66 \se{.11} & 71.24 \se{.11} & 70.96 \se{.11} & \best{73.23 \se{.11}} \\
  & & & ${\cal F}_1$ & 66.88 \se{.11} & 71.79 \se{.11} & 70.84 \se{.11} & 70.89 \se{.11} & 72.82 \se{.12} & 69.66 \se{.11} & 71.61 \se{.11} & 71.18 \se{.11} & 70.90 \se{.11} & \best{73.18 \se{.11}} \\ [2pt] \hline
  Pred- & 9577 & 84 & $ \cal P $ & 78.44 \se{.04} & 85.05 \se{.07} & 83.46 \se{.07} & 86.06 \se{.08} & 83.48 \se{.06} & 86.74 \se{.08} & 85.15 \se{.08} & 85.46 \se{.07} & 85.14 \se{.08} & \best{87.47 \se{.08}} \\
  Maint & 255 & 84 & $ \cal R $ & 62.89 \se{.06} & 82.56 \se{.08} & 78.92 \se{.08} & 85.93 \se{.08} & 77.97 \se{.08} & 86.62 \se{.08} & 83.52 \se{.08} & 82.73 \se{.08} & 83.72 \se{.08} & \best{87.36 \se{.08}} \\
  & & & ${\cal F}_1$ & 56.94 \se{.1} & 82.23 \se{.09} & 78.16 \se{.09} & 85.92 \se{.08} & 76.99 \se{.09} & 86.61 \se{.08} & 83.32 \se{.09} & 82.37 \se{.08} & 83.55 \se{.08} & \best{87.35 \se{.08}} \\ [2pt] \hline
  Rice & 7935 & 2050 & $ \cal P $ & 98.82 \se{.01} & 98.82 \se{.01} & 98.75 \se{.01} & 98.82 \se{.01} & 98.27 \se{.01} & 98.82 \se{.01} & 98.77 \se{.01} & 98.16 \se{.01} & 97.60 \se{.01} & \best{98.85 \se{.01}} \\
  & 6150 & 2050 & $ \cal R $ & 98.81 \se{.01} & 98.81 \se{.01} & 98.75 \se{.01} & 98.81 \se{.01} & 98.25 \se{.01} & 98.81 \se{.01} & 98.77 \se{.01} & 98.14 \se{.01} & 97.55 \se{.01} & \best{98.85 \se{.01}} \\
  & & & ${\cal F}_1$ & 98.81 \se{.01} & 98.81 \se{.01} & 98.75 \se{.01} & 98.81 \se{.01} & 98.25 \se{.01} & 98.81 \se{.01} & 98.77 \se{.01} & 98.14 \se{.01} & 97.55 \se{.01} & \best{98.85 \se{.01}} \\ [2pt] \hline
  Room & 1450 & 243 & $ \cal P $ & 98.58 \se{.02} & 98.76 \se{.01} & 98.79 \se{.02} & 98.54 \se{.02} & 98.47 \se{.02} & 98.62 \se{.02} & 98.78 \se{.02} & 98.74 \se{.02} & \best{98.85 \se{.01}} & 98.69 \se{.02} \\
  & 729 & 243 & $ \cal R $ & 98.57 \se{.02} & 98.75 \se{.02} & 98.78 \se{.02} & 98.51 \se{.02} & 98.41 \se{.02} & 98.61 \se{.02} & 98.77 \se{.02} & 98.73 \se{.02} & \best{98.83 \se{.01}} & 98.67 \se{.02} \\
  & & & ${\cal F}_1$ & 98.57 \se{.02} & 98.75 \se{.02} & 98.78 \se{.02} & 98.51 \se{.02} & 98.41 \se{.02} & 98.61 \se{.02} & 98.77 \se{.02} & 98.73 \se{.02} & \best{98.83 \se{.01}} & 98.67 \se{.02} \\ [2pt] \hline
  Stroke & 4799 & 62 & $ \cal P $ & 78.41 \se{.06} & 90.18 \se{.08} & 87.30 \se{.08} & 88.32 \se{.08} & 82.06 \se{.08} & 88.47 \se{.09} & 90.06 \se{.08} & 89.06 \se{.08} & \best{90.37 \se{.08}} & 89.93 \se{.07} \\
  & 187 & 62 & $ \cal R $ & 63.84 \se{.08} & 89.70 \se{.08} & 85.46 \se{.09} & 87.26 \se{.09} & 76.09 \se{.1} & 87.92 \se{.09} & 89.45 \se{.08} & 88.15 \se{.09} & \best{89.86 \se{.08}} & 88.69 \se{.09} \\
  & & & ${\cal F}_1$ & 58.44 \se{.11} & 89.67 \se{.09} & 85.27 \se{.09} & 87.16 \se{.09} & 74.86 \se{.12} & 87.87 \se{.09} & 89.41 \se{.08} & 88.07 \se{.09} & \best{89.82 \se{.08}} & 88.58 \se{.09} \\ [2pt] \hline
  Water & 1679 & 319 & $ \cal P $ & 59.97 \se{.06} & 58.61 \se{.06} & 58.20 \se{.06} & 58.56 \se{.06} & \best{59.98 \se{.06}} & 58.51 \se{.06} & 58.46 \se{.06} & 58.27 \se{.06} & 58.39 \se{.06} & 59.89 \se{.06} \\
  & 959 & 319 & $ \cal R $ & 57.77 \se{.05} & 58.45 \se{.06} & 58.11 \se{.06} & 58.52 \se{.06} & \best{59.90 \se{.06}} & 58.39 \se{.06} & 58.42 \se{.06} & 58.25 \se{.06} & 58.37 \se{.06} & 59.86 \se{.06} \\
  & & & ${\cal F}_1$ & 55.31 \se{.05} & 58.26 \se{.06} & 58.00 \se{.06} & 58.47 \se{.06} & 59.83 \se{.06} & 58.25 \se{.06} & 58.38 \se{.06} & 58.23 \se{.06} & 58.35 \se{.06} & \best{59.83 \se{.06}} \\ [2pt] \hline
  Wine & 669 & 186 & $ \cal P $ & 72.22 \se{.07} & 72.22 \se{.07} & 71.98 \se{.07} & 72.07 \se{.07} & 73.18 \se{.06} & 72.28 \se{.07} & 72.16 \se{.07} & 71.83 \se{.07} & 71.06 \se{.07} & \best{74.59 \se{.06}} \\
  & 558 & 186 & $ \cal R $ & 71.75 \se{.07} & 71.75 \se{.07} & 71.84 \se{.07} & 71.95 \se{.07} & 73.10 \se{.06} & 72.19 \se{.06} & 72.06 \se{.07} & 71.77 \se{.07} & 70.88 \se{.06} & \best{74.50 \se{.06}} \\
  & & & ${\cal F}_1$ & 71.60 \se{.07} & 71.60 \se{.07} & 71.80 \se{.07} & 71.92 \se{.07} & 73.08 \se{.06} & 72.16 \se{.07} & 72.03 \se{.07} & 71.75 \se{.07} & 70.81 \se{.07} & \best{74.47 \se{.06}} \\ [2pt] \hline
\end{tabular}%

\vspace{0.025in}
\footnotesize{~$^\ast$ Some redundant features were removed before our analysis.}%
\end{table}

Table~\ref{tab:binary_benchmark} clearly shows that the performance of the proposed method is highly satisfactory. In 6 out of these 12 data sets (Breast Cancer, Loan, NASA, Predictive Maintenance, Rice and Wine), our method has the best precision, recall and $F_1$ score.  It has the best $F_1$ score on the Pima Indian and Water data sets as well. On Fetal Health and Haberman data, the proposed method has the second best performance, while it has the third best performance in Stroke data. On all these data sets, the performance of the proposed method is among the best four methods.

To compare the overall performance of different methods in a comprehensive way, following the ideas of \cite{chaudhuri2008,ghosh2012}, we introduce the notion of efficiency of a classifier. If ${\cal P}_1,{\cal P}_2,\ldots,{\cal P}_T$ are the precision of $T$ classifiers on a data set, the precision-efficiency of the $i$-th classifier is defined as ${\cal P}_i/\max_{1\le j\le T} {\cal P}_j$ for $i=1,2,\ldots,T$. So, for a data set, the best classifier has efficiency $1$, whereas a small efficiency indicates poor performance by a classifier compared to its competitors. For each data set, we compute these values for all the classifiers, and they are graphically represented by boxplots in Figure~\ref{fig:binary_prec_efficiency}. This figure clearly shows that the overall performance of the proposed method was much better than its competitors. Similarly, one can compute the recall-efficiency and the $F_1$-efficiency of different classifiers and construct the corresponding boxplots. But those boxplots were similar to the ones given in Figure~\ref{fig:binary_prec_efficiency}, so we do not report them here.

\begin{figure}[h!]
	\centering
	\includegraphics[height=2.3in]{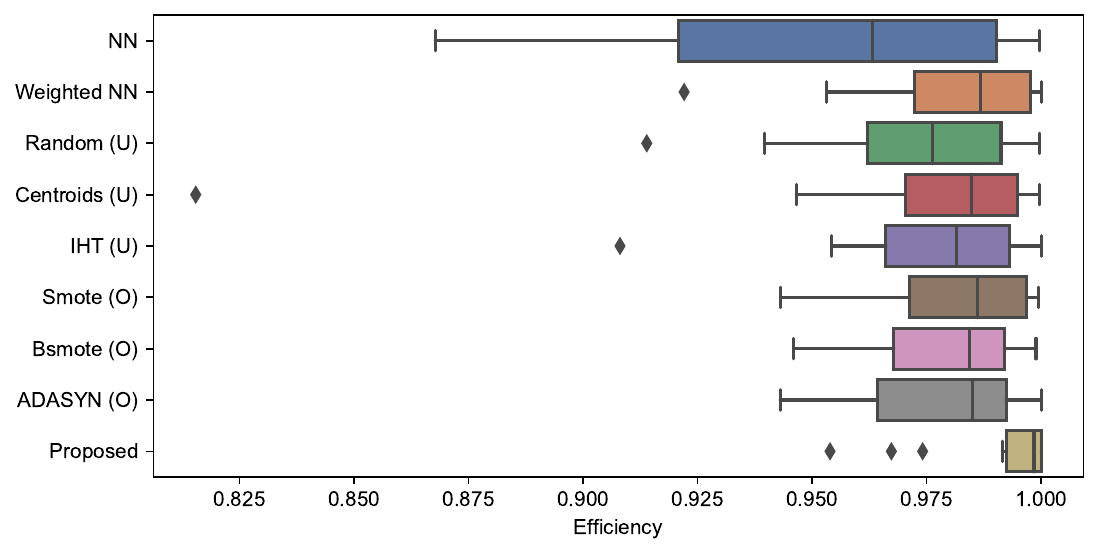}
	\caption{Boxplots of precision-efficiency of different classifiers on two-class benchmark data sets. The undersampling methods are indicated by (U) and the oversampling methods are indicated by (O)\label{fig:binary_prec_efficiency}}
\end{figure}

\section{Classification involving multiple classes}
\label{sec:multi-class}

The proposed method can be generalized for $J$-class problems with $J>2$. One simple way of generalization is to adopt the one-vs-one (OvO) approach, where we perform $\binom{J}{2}$ binary classifications taking one pair of classes at a time and then use majority voting. On some rare occasions, this may lead to ties, where more than one classes have the maximum number of votes. One can break these ties by considering a classification problem involving those classes only. If the problem is still not resolved, then the class having the maximum evidence in terms of the $p$-value (as described in the case of binary classification) can be chosen. However, the OvO approach has high computing cost since we need to perform $\binom{J}{2}$ binary classifications. We can adopt a slight modification of OvO to reduce the computing cost. First, we arrange the classes in descending order of the sample sizes (ties are resolved arbitrarily). Next, we perform $J-1$ binary classifications treating the $J$-th class as the minority class and one of the rest as the majority class. For an observation ${\bf x}$, let $S_{\bf x}(J)$ be the collection of all classes winning over the minority class and $J_0=|S_{\bf x}(J)|$ be the cardinality of $S_{\bf x}(J)$. If $S_{\bf x}(J)$ is empty (i.e., $J_0=0$), we classify ${\bf x}$ to Class-$J$. If $J_0=1$, we assign ${\bf x}$ to the single member of $S_{\bf x}(J)$. Otherwise, we repeat the procedure for a $J_0$-class problem, involving the classes in $S_{\bf x}(J)$ only. In our numerical studies, there was no visible difference between the performance of these two methods, and here we report the results for the second one (referred to as the OvO$+$ method). A consistency result similar to Theorem~\ref{thm:binary_consistency} can be proved for this method, which is stated in Theorem~\ref{thm:consistency_multiclass}.

Instead of one-vs-one, one can also use the one-vs-rest (OvR) method, where each time we consider a binary classification problem between one class and a combined class having observations from the rest of the classes. After $J$ binary classifications, majority voting is used to arrive at a final decision. Here also, we may have ties in some cases, and the method based on the maximum $p$-value type evidence can be used to resolve these ties. Another option is to consider a classification problem among the tied classes and repeat the procedure until we get a winner. It can be shown that as the sample size increases, the probability of getting a unique winner by this method converges to $1$ and the resulting classifier is consistent. This result is stated in Theorem~\ref{thm:consistency_multiclass}. However, in practice, it may be possible to have some rare cases, where the ties cannot be resolved in this way. The method based on maximum $p$-value type evidence can be used in such rare cases. In our numerical studies, both of these methods led to almost similar results in all data sets. So, here we report the results for the second method (referred to as the OvR$+$ method) only.

\begin{theorem}\label{thm:consistency_multiclass}
Suppose that the $J$ competing classes have continuous probability density functions $f_1,\ldots,f_J$, respectively. Assume that for each $i=1,\ldots,J$, $n_i/n \to \pi_i \in (0,1)$ as $n \to \infty$ ($\sum_{i=1}^{J} \pi_i=1$). Also assume that for each binary classification (both in one-vs-one and one-vs-rest methods), $k_{\max}$, the maximum number of neighbors from the minority class diverges to infinity in such a way that $k_{\max}/n \rightarrow 0$ as $n \to \infty$. Then the proposed methods OvO$+$ and OvR$+$ classify an observation ${\bf x}$ to the class $i_0=\arg\max_{1 \le i\le J} f_i({\bf x})$ with probability tending to $1$ as $n$ tends to infinity.
\end{theorem}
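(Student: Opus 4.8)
The plan is to reduce everything to the binary result of Theorem~\ref{thm:binary_consistency} together with a purely combinatorial bookkeeping argument on a high-probability event. Both OvO$+$ and OvR$+$ are assembled from finitely many binary classifications, where the number of such classifications depends only on $J$, not on $n$. So first I would isolate the behaviour of a single binary subproblem. For OvO$+$, the comparison of classes $i$ and $j$ uses only the training data from those two classes; their sample proportions converge to $\pi_i/(\pi_i+\pi_j)$ and $\pi_j/(\pi_i+\pi_j)$, both in $(0,1)$, so Theorem~\ref{thm:binary_consistency} applies and the subproblem declares class $i$ the winner iff $f_i({\bf x})>f_j({\bf x})$, with probability tending to $1$. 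For OvR$+$, the comparison is between a single class $i$ (density $f_i$, proportion $\pi_i$) and the pooled ``rest,'' whose observations are drawn from the finite mixture $g_{-i}({\bf x})=\big(\sum_{j\ne i}\pi_j f_j({\bf x})\big)/(1-\pi_i)$. Since $g_{-i}$ is again continuous and the proportions $\pi_i,1-\pi_i$ lie in $(0,1)$, Theorem~\ref{thm:binary_consistency} applies verbatim (its proof only uses continuity of the two densities and the local ratio $\pi_i f_i/(\pi_i f_i+(1-\pi_i)g_{-i})$), so the OvR subproblem declares class $i$ a winner iff $f_i({\bf x})>g_{-i}({\bf x})$; the same holds for every sub-collection $S$ arising in the recursion, with the mixture re-pooled over $S$.

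Next I would define a single ``good event'' $G_n$ on which every binary comparison that could possibly be invoked agrees with the corresponding density comparison at ${\bf x}$: for OvO$+$, that every pair $(i,j)$ is decided by the sign of $f_i({\bf x})-f_j({\bf x})$; for OvR$+$, that for every nonempty $S\subseteq\{1,\dots,J\}$ and every $i\in S$, the ``class $i$ vs.\ pooled $S\setminus\{i\}$'' comparison is decided by the sign of $f_i({\bf x})-g^S_{-i}({\bf x})$. There are at most $\binom{J}{2}$ (resp.\ $J\,2^{J}$) such comparisons, a number independent of $n$, each correct with probability tending to $1$; a union bound then gives $P(G_n)\to 1$. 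On $G_n$ the entire procedure becomes deterministic, so it remains to prove the combinatorial claim that both methods output $i_0=\arg\max_i f_i({\bf x})$. I assume the maximiser is unique, which is the regime in which the statement is meaningful.

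For OvO$+$ I would argue by induction on the number of classes. The current minority class, say $m$, is eliminated unless no class beats it, so on $G_n$ the surviving set is $S=\{i\ne m: f_i({\bf x})>f_m({\bf x})\}$. If $S=\emptyset$ then $f_m({\bf x})\ge f_i({\bf x})$ for all $i$, i.e.\ $m=i_0$, and we output $m$ correctly; if $|S|=1$ its unique element has the strictly largest density and equals $i_0$. Otherwise $m\ne i_0$, so $f_{i_0}({\bf x})>f_m({\bf x})$ forces $i_0\in S$, while $m\notin S$ gives $|S|<J$; since $i_0$ is still the density-maximiser over $S$, the induction hypothesis applied to the $|S|$-class subproblem returns $i_0$.

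For OvR$+$ the same inductive scheme works once two facts are in place. First, $i_0$ always votes for itself: $g_{-i_0}({\bf x})$ is a weighted average of densities each strictly below $f_{i_0}({\bf x})$, so $f_{i_0}({\bf x})>g_{-i_0}({\bf x})$ and $i_0$ attains the maximum possible vote count, hence lies in the top-voted set $S$. Second, the top-voted set strictly shrinks each round: any class $\ell$ of minimal density in the current pool has $g^S_{-\ell}({\bf x})>f_\ell({\bf x})$ (the average is over densities all at least $f_\ell({\bf x})$, with $i_0$ strictly larger), so $\ell$ does not vote for itself and is dropped. Thus $i_0\in S$, $|S|$ decreases, and $i_0$ remains the density-maximiser over $S$, so the induction closes and also yields a unique winner with probability tending to $1$. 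I expect this OvR$+$ step to be the main obstacle, both because it requires justifying that the pooled ``rest'' may be treated as a single class satisfying the hypotheses of Theorem~\ref{thm:binary_consistency}, and because the tie-breaking recursion must be shown to terminate at $i_0$ rather than stall; by contrast the OvO$+$ analysis is essentially a knock-out argument and is comparatively routine.
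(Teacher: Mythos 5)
Your proposal is correct and follows essentially the same route as the paper: each binary subproblem (pairwise for OvO$+$, class versus the continuous mixture density of the pooled ``rest'' for OvR$+$) is handled by Theorem~\ref{thm:binary_consistency}, finitely many such comparisons (a number depending only on $J$) give a high-probability good event, and a recursion shows both procedures terminate at $i_0$. You are somewhat more explicit than the paper about the union bound over all candidate subsets and about why the surviving set strictly shrinks (the minimal-density class loses to the pooled average), but these are refinements of the same argument rather than a different approach.
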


\begin{proof} ($i$) \underline{OvO$+$ method}: At the first step, we consider $J-1$ binary classifications, taking the $J$-th class and one of the other classes at a time. Let $S=\{i: f_i({\bf x})>f_J({\bf x})\}$. Note that for each of these binary classification problems, the conditions of Theorem~\ref{thm:binary_consistency} are satisfied. Since $J$ is finite, following the proof of Theorem~\ref{thm:binary_consistency}, one can show that $P(S_{\bf x}(J)=S) \rightarrow 1$ as $n \rightarrow \infty$. So, for $J_0=|S_{\bf x}(J)|=0$ or $1$, the proof follows immediately. For $J_0>1$, we have $P(i_0 \in S_{\bf x}(J)) \rightarrow 1$ and the result is obtained by repeating the same argument for the classification problem involving $J_0$ ($J_0 \le J-1$) classes in ${S_{\bf x}(J)}$.

\noindent
($ii$) \underline{OvR$+$ method}: Consider the classification problem between Class-$i$ ($i=1,\ldots,J$) and the combined class containing the rest. Note that this combined class, being a mixture of $J-1$ classes, has density $f_{-i}({\bf x})=\sum_{j\neq i}\pi_jf_j({\bf x})/\sum_{j \neq i}\pi_j$. Now, from Theorem~\ref{thm:binary_consistency}, it is easy to see that for an observation ${\bf x}$, our binary classifier will prefer Class-$i$ with probability tending to $1$ if and only if  $f_i({\bf x})>f_{-i}({\bf x}) \Leftrightarrow f_i({\bf x})>\sum_{j=1}^{J}\pi_jf_j({\bf x})=f_0({\bf x})$, say. Now, define $S^{\ast}=\{j: f_j({\bf x})>f_0({\bf x})\}$. So, if $i \in S^{\ast}$ (respectively, $i \notin S^{\ast}$), Class-$i$ wins over (respectively, loses to) the rest with probability tending to $1$ as $n$ tends to infinity. From the definition of $i_0$, it is clear that $i_0 \in S^{\ast}$. Therefore, if $J_*=|S^{\ast}|=1$, the result follows immediately. If $J_*>1$ , we need to consider a classification problem involving $J_*$ classes in $S^{\ast}$ (note that $J_*<J$). Since $J$ is finite, repeated application of the same argument leads to the proof. 
\end{proof}

\begin{table}[h!]
\centering
\scriptsize
\setlength{\tabcolsep}{0.05in}
\caption{Precision ($\mathcal P$), recall ($\mathcal R$) and $F_1$-score ($\mathcal F_1$) of different classifiers on benchmark data sets involving more than two classes. Brief description of the datasets are given before providing the results. The reported numbers are in $\%$. The best result in each example is underlined. \label{tab:benchmark_multiclass}}
\begin{tabular}{|c|ccccccccccc|}
\multicolumn{12}{c}{Dataset description} \\ \hline
Dataset & Drugs & Dry & EColi & Fetal & Letter & Sat & Shuttle & SVM & SVM & Vehicle & Wine  \\
 & & Bean & & Health & & Image & & Guide 2 & Guide 4 & & Recog \\ \hline
Dimension & 5 & 16 & 6 & 21 & 16 & 36 & 9 & 20 & 10 & 18 & 13  \\ \hline
No. of classes & 5 & 7 & 5 & 3 & 26 & 6 & 3 & 3 & 6 & 4 & 3 \\ \hline 
Training & 87 & 3416 & 138 & 1611 & 473 472 471 468 & 969 & 10022 & 208 & 99 & 169 & 60 \\
sample & 50 & 2506 & 72 & 251 & 465 464 463 460 & 935 & 1851 & 104 & 96 & 168 & 46 \\ 
 size & 19 & 1897 & 47 & 132 & 460 449 447 442 & 858 & 561 & 40 & 90 & 163 & 36  \\ 
 & 12 & 1798 & 30 & & 441 440 438 434 & 376 & & & 79 & 150 &  \\
 & 12 & 1500 & 15 & & 433 433 431 431 & 367 & & & 66 & & \\
 &  & 1192 & & & 429 426 420 413 & 312 & & & 62 & & \\
 &  & 392 & & &  409 404 & & & & & & \\ \hline
Test sample &  & & & & & & & & & & \\
size (per class) & 4 & 130 & 5 & 44 & 134 & 103 & 187 & 13 & 20 & 49 & 11 \\ \hline
\end{tabular}

\medskip
\setlength{\tabcolsep}{0.03in}
\begin{tabular}{|c|HHc|ccHcccccccc|}
 \multicolumn{15}{c}{Results} \\ \hline
  Dataset & Train & Test & & NN & Wt.\ NN & RandOv & RandUnd & IHT & Centroids & Smote & Bsmote & ADASYN & \multicolumn{2}{c|}{Proposed} \\
  & & & & & & & & & & & & & OvO+ & OvR+ \\ \hline
  Drugs & 87 50 19 12 12 & 4 & $ \cal P $ & 81.49 \se{.28} & 81.69 \se{.29} & 90.70 \se{.20} & 65.89 \se{.41} & 66.18 \se{.41} & 65.41 \se{.41} & \best{90.72 \se{.20}} & 90.63 \se{.21} & 82.94 \se{.27} & 79.40 \se{.32} & 82.70 \se{.28} \\
  & & & $ \cal R $ & 77.94 \se{.28} & 77.87 \se{.27} & 88.53 \se{.21} & 60.28 \se{.33} & 60.49 \se{.34} & 58.14 \se{.33} & \best{88.66 \se{.21}} & 88.47 \se{.22} & 79.34 \se{.28} & 73.12 \se{.30} & 78.32 \se{.30} \\
  & & & ${\cal F}_1$ & 77.06 \se{.29} & 76.54 \se{.30} & 87.65 \se{.24} & 57.82 \se{.35} & 58.70 \se{.36} & 55.90 \se{.36} & \best{87.78 \se{.24}} & 87.58 \se{.24} & 78.49 \se{.30} & 71.50 \se{.33} & 77.33 \se{.32} \\ [2pt] \hline
  Dry & 3416 2506 1897 1798 1500 1192 392 & 130 & $ \cal P $ & 93.55 \se{.02} & 93.49 \se{.03} & 92.78 \se{.03} & 92.84 \se{.03} & 92.61 \se{.03} & 93.34 \se{.03} & 93.20 \se{.03} & 91.95 \se{.03} & 93.51 \se{.02} & 93.71 \se{.03} & \best{93.71 \se{.02}} \\
  Bean & & & $ \cal R $ & 93.42 \se{.03} & 93.41 \se{.03} & 92.71 \se{.03} & 92.72 \se{.03} & 92.22 \se{.03} & 93.24 \se{.03} & 93.13 \se{.03} & 91.86 \se{.03} & 93.38 \se{.02} & \best{93.61 \se{.03}} & 93.59 \se{.03} \\
  & & & ${\cal F}_1$ & 93.44 \se{.03} & 93.42 \se{.03} & 92.72 \se{.03} & 92.74 \se{.03} & 92.29 \se{.03} & 93.25 \se{.03} & 93.14 \se{.03} & 91.86 \se{.03} & 93.40 \se{.02} & \best{93.62 \se{.03}} & 93.61 \se{.03} \\ [2pt] \hline
  EColi & 138 72 47 30 15 & 5 & $ \cal P $ & 83.21 \se{.22} & 83.64 \se{.23} & 80.03 \se{.24} & 84.25 \se{.24} & 85.90 \se{.22} & 85.12 \se{.22} & 81.83 \se{.23} & 78.17 \se{.24} & 82.22 \se{.23} & 85.63 \se{.22} & \best{86.46 \se{.21}} \\
  & & & $ \cal R $ & 80.39 \se{.21} & 81.99 \se{.21} & 78.15 \se{.24} & 82.52 \se{.23} & 83.94 \se{.21} & 83.28 \se{.22} & 80.05 \se{.23} & 76.41 \se{.23} & 79.69 \se{.21} & 83.90 \se{.21} & \best{84.85 \se{.21}} \\
  & & & ${\cal F}_1$ & 79.37 \se{.23} & 81.36 \se{.23} & 77.58 \se{.24} & 81.80 \se{.24} & 83.24 \se{.23} & 82.56 \se{.23} & 79.51 \se{.23} & 75.73 \se{.23} & 78.53 \se{.23} & 83.30 \se{.22} & \best{84.41 \se{.22}} \\ [2pt] \hline
  Fetal & 1611 251 132 & 44 & $ \cal P $ & 80.37 \se{.09} & 86.13 \se{.09} & 86.05 \se{.09} & 83.99 \se{.09} & 82.54 \se{.09} & 84.19 \se{.09} & 86.74 \se{.09} & 86.12 \se{.09} & 82.34 \se{.09} & \best{86.99 \se{.08}} & 86.58 \se{.08} \\
  Health & & & $ \cal R $ & 76.40 \se{.10} & 85.61 \se{.09} & 85.58 \se{.09} & 83.21 \se{.10} & 81.97 \se{.10} & 83.25 \se{.10} & \best{86.38 \se{.09}} & 85.94 \se{.09} & 79.67 \se{.09} & 85.81 \se{.09} & 85.66 \se{.09} \\
  & & & ${\cal F}_1$ & 76.14 \se{.11} & 85.69 \se{.09} & 85.60 \se{.09} & 83.32 \se{.10} & 81.94 \se{.10} & 83.40 \se{.10} & \best{86.40 \se{.09}} & 85.87 \se{.09} & 78.06 \se{.10} & 85.96 \se{.09} & 85.78 \se{.09} \\ [2pt] \hline
  Letter & 473 472 471 468 465 464 463 460 460 449 447 442 441 440 438 434 433 433 431 431 429 426 420 413 409 404 & 134 & $ \cal P $ & 93.24 \se{.01} & 93.31 \se{.01} & 93.09 \se{.01} & 92.88 \se{.01} & 91.48 \se{.01} & 93.28 \se{.01} & 93.17 \se{.01} & 93.28 \se{.01} & 93.25 \se{.01} & \best{93.83 \se{.01}} & 92.71 \se{.01} \\
  & & & $ \cal R $ & 93.09 \se{.01} & 93.18 \se{.01} & 92.94 \se{.01} & 92.70 \se{.01} & 91.07 \se{.02} & 93.13 \se{.01} & 93.03 \se{.01} & 93.15 \se{.01} & 93.10 \se{.01} & \best{93.72 \se{.01}} & 92.56 \se{.01} \\
  & & & ${\cal F}_1$ & 93.11 \se{.01} & 93.20 \se{.01} & 92.96 \se{.01} & 92.73 \se{.01} & 91.16 \se{.02} & 93.15 \se{.01} & 93.04 \se{.01} & 93.16 \se{.01} & 93.12 \se{.01} & \best{93.73 \se{.01}} & 92.57 \se{.01} \\ [2pt] \hline
  Sat & 969 935 858 376 367 312 & 103 & $ \cal P $ & 88.19 \se{.04} & \best{88.94 \se{.04}} & 88.57 \se{.04} & 87.77 \se{.04} & 86.00 \se{.04} & 88.20 \se{.04} & 88.93 \se{.04} & 87.86 \se{.04} & 88.36 \se{.04} & 88.57 \se{.04} & 88.67 \se{.04} \\
  Image & & & $ \cal R $ & 87.75 \se{.04} & \best{88.83 \se{.04}} & 88.37 \se{.04} & 87.44 \se{.04} & 82.03 \se{.05} & 87.95 \se{.04} & 88.57 \se{.04} & 87.48 \se{.04} & 87.02 \se{.04} & 88.28 \se{.04} & 88.53 \se{.04} \\
  & & & ${\cal F}_1$ & 87.47 \se{.04} & \best{88.83 \se{.04}} & 88.40 \se{.04} & 87.49 \se{.04} & 82.56 \se{.04} & 87.76 \se{.04} & 88.63 \se{.04} & 87.53 \se{.04} & 87.25 \se{.04} & 88.33 \se{.04} & 88.54 \se{.04} \\ [2pt] \hline
  Shuttle & 10022 1851 561 & 187 & $ \cal P $ & 99.73 \se{.01} & 99.86 \se{.01} & 99.83 \se{.01} & 99.75 \se{.01} & 99.69 \se{.01} & 99.63 \se{.01} & 99.82 \se{.01} & 99.76 \se{.01} & 99.82 \se{.01} & \best{99.92 \se{.00}} & 99.91 \se{.00} \\
  & & & $ \cal R $ & 99.73 \se{.01} & 99.86 \se{.01} & 99.83 \se{.01} & 99.75 \se{.01} & 99.69 \se{.01} & 99.63 \se{.01} & 99.82 \se{.01} & 99.76 \se{.01} & 99.82 \se{.01} & \best{99.91 \se{.00}} & 99.91 \se{.00} \\
  & & & ${\cal F}_1$ & 99.73 \se{.01} & 99.86 \se{.01} & 99.83 \se{.01} & 99.75 \se{.01} & 99.69 \se{.01} & 99.63 \se{.01} & 99.82 \se{.01} & 99.76 \se{.01} & 99.82 \se{.01} & \best{99.91 \se{.00}} & 99.91 \se{.00} \\ [2pt] \hline
  SVM & 208 104 40 & 13 & $ \cal P $ & 71.64 \se{.20} & 68.81 \se{.22} & 68.89 \se{.23} & 72.35 \se{.22} & 72.96 \se{.22} & 76.42 \se{.19} & 69.92 \se{.21} & 69.49 \se{.21} & 71.13 \se{.21} & \best{77.69 \se{.20}} & 75.90 \se{.20} \\
  Guide & & & $ \cal R $ & 62.84 \se{.19} & 67.93 \se{.21} & 67.12 \se{.21} & 69.11 \se{.22} & 69.53 \se{.21} & 72.03 \se{.20} & 68.69 \se{.21} & 68.05 \se{.21} & 67.46 \se{.20} & \best{75.94 \se{.20}} & 73.45 \se{.20} \\
  2 & & & ${\cal F}_1$ & 59.54 \se{.23} & 67.58 \se{.22} & 66.63 \se{.22} & 68.51 \se{.23} & 68.44 \se{.23} & 71.15 \se{.21} & 68.42 \se{.21} & 67.80 \se{.21} & 66.44 \se{.22} & \best{75.58 \se{.20}} & 72.67 \se{.21} \\ [2pt] \hline
  SVM & 99 96 90 79 66 62 & 20 & $ \cal P $ & \best{67.81 \se{.13}} & 66.67 \se{.13} & 66.02 \se{.13} & 65.14 \se{.13} & 66.32 \se{.13} & 63.55 \se{.13} & 66.69 \se{.13} & 66.65 \se{.13} & 66.56 \se{.13} & 67.23 \se{.13} & 67.14 \se{.13} \\
  Guide & & & $ \cal R $ & 64.09 \se{.12} & 65.32 \se{.13} & 64.67 \se{.12} & 63.34 \se{.13} & 64.61 \se{.12} & 61.53 \se{.13} & 65.25 \se{.13} & 65.45 \se{.12} & 64.38 \se{.13} & \best{65.62 \se{.13}} & 65.28 \se{.13} \\
  4 & & & ${\cal F}_1$ & 63.30 \se{.12} & 65.09 \se{.13} & 64.40 \se{.13} & 63.06 \se{.13} & 64.47 \se{.12} & 61.50 \se{.13} & 65.02 \se{.13} & 65.24 \se{.12} & 64.25 \se{.13} & \best{65.59 \se{.13}} & 65.14 \se{.13} \\ [2pt] \hline
  Vehicle & 169 168 163 150 & 49 & $ \cal P $ & 71.00 \se{.09} & 70.52 \se{.09} & 70.96 \se{.09} & 70.64 \se{.09} & 70.30 \se{.09} & \best{71.28 \se{.09}} & 71.02 \se{.09} & 71.07 \se{.09} & 71.06 \se{.09} & 69.82 \se{.09} & 70.28 \se{.10} \\
  & & & $ \cal R $ & 71.30 \se{.08} & 71.64 \se{.08} & 71.39 \se{.08} & 71.13 \se{.08} & 71.04 \se{.08} & \best{71.72 \se{.08}} & 71.46 \se{.08} & 71.62 \se{.08} & 71.44 \se{.08} & 71.25 \se{.08} & 71.53 \se{.08} \\
  & & & ${\cal F}_1$ & 70.93 \se{.09} & 70.73 \se{.09} & 70.99 \se{.09} & 70.70 \se{.08} & 70.43 \se{.09} & \best{71.31 \se{.09}} & 71.06 \se{.09} & 71.16 \se{.08} & 71.02 \se{.08} & 70.11 \se{.09} & 70.37 \se{.09} \\ [2pt] \hline
  Wine & 60 46 36 & 11 & $ \cal P $ & \best{97.41 \se{.08}} & 95.89 \se{.09} & 96.29 \se{.09} & 95.92 \se{.09} & 96.03 \se{.09} & 96.10 \se{.09} & 96.36 \se{.09} & 96.91 \se{.08} & 96.39 \se{.09} & 96.07 \se{.09} & 96.26 \se{.09} \\
  Recog & & & $ \cal R $ & \best{97.15 \se{.09}} & 95.26 \se{.10} & 95.82 \se{.10} & 95.36 \se{.10} & 95.50 \se{.10} & 95.57 \se{.10} & 95.89 \se{.10} & 96.57 \se{.09} & 95.90 \se{.10} & 95.52 \se{.10} & 95.74 \se{.10} \\
  & & & ${\cal F}_1$ & \best{97.12 \se{.09}} & 95.15 \se{.11} & 95.72 \se{.11} & 95.25 \se{.11} & 95.40 \se{.11} & 95.47 \se{.11} & 95.79 \se{.11} & 96.51 \se{.10} & 95.81 \se{.10} & 95.40 \se{.11} & 95.64 \se{.11} \\ [2pt] \hline
\end{tabular}%
\end{table}%

\subsection{Empirical Results}
\label{sec:empirical_multiclass}

We analyze 11 benchmark data sets to evaluate the performance of our proposed multi-class classifiers. The Dry Bean data set is taken from Kaggle and the E-Coli data set is taken from the UCI Machine Learning Repository. The rest of the data sets are available at LIBSVM (\url{https://www.csie.ntu.edu.tw/~cjlin/libsvmtools/datasets/}). For each data set, we form the training and the test sets (sizes are reported in Table~\ref{tab:benchmark_multiclass}) by randomly partitioning the data as before. At first, the observations from the smallest class are divided into two groups in nearly 3:1 ratio. Equal number of observations from each class are added to the smaller group to form the test set, while the training set is formed by the rest of the observations. This random partitioning is done 500 times, and the performance of different classifiers over these 500 partitions are reported in Table~\ref{tab:benchmark_multiclass}. 

In 5 out of these 11 data sets (Dry Bean, Letter, Shuttle, SVM Guide 2 and SVM Guide 4), the proposed OvO$+$ method has the best performance. Among 2 of these data sets (Dry Bean and Shuttle), the performance of the proposed OvR$+$ method is almost similar to OvO$+$, while in EColi data it outperform all other classifiers. Both the proposed methods also perform well on Fetal Health and Satimage data sets. Only in the case of Drugs data, our proposed methods, especially OvO$+$, has a poor performance.

\begin{figure}[h!]
	\centering
	\includegraphics[height=2.30in]{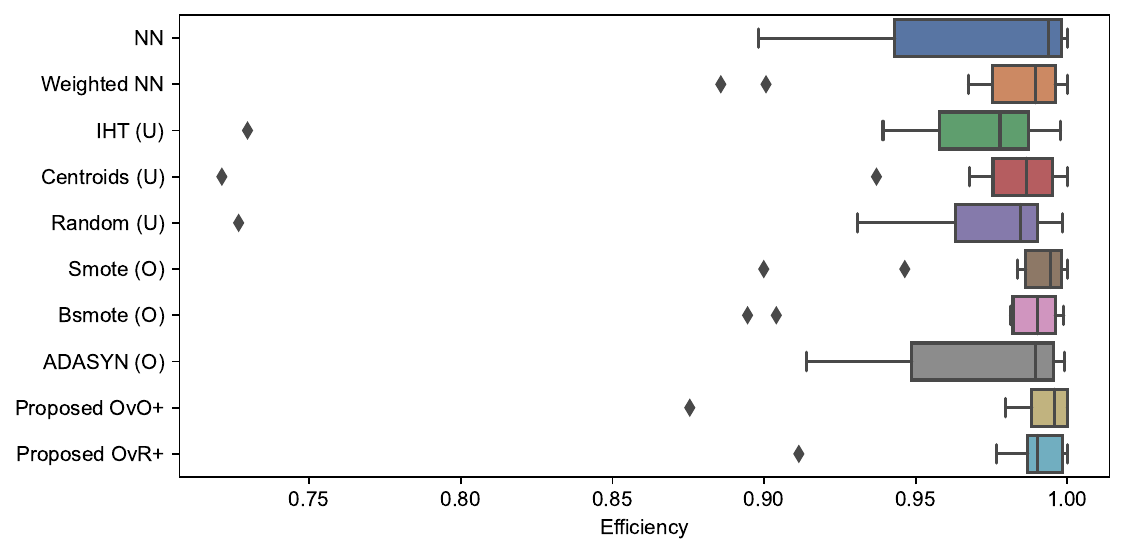}
	\caption{Boxplots of precision-efficiency of different classifiers on multi-class benchmark data sets. The undersampling methods are indicated by (U) and the oversampling methods are indicated by (O)\label{fig:prec_efficiency_multiclass}}
\end{figure}

To compare the overall performance of different methods on these 11 data sets, we construct the box plots of efficiency scores as before (see Figure~\ref{fig:prec_efficiency_multiclass}). These plots show that the overall performance of OvO$+$ is better than its competitors. The OvR$+$ method also had good overall performance. The precision-efficiency of OvR$+$ is slightly lower than that of OvO$+$, but its recall-efficiency and $F_1$-efficiency are comparable to OvO$+$. The performance of the undersampling methods are worse than the oversampling methods in the multi-class problems. 

\section{Concluding Remarks}
\label{sec:concluding_remarks}

We present a statistical method for nearest neighbor classification based on imbalanced training data. While the usual nearest neighbor classifier uses the same value of $k$ (the number of neighbors) for classification of all observations, here the choice of $k$ is case-specific, and it is obtained by maximizing a $p$-value-based evidence. Unlike the existing methods, our proposed classifiers do not need to use any adhoc weight functions, remove some observations, or add some pseudo observations for their implementation. So, the results are exactly reproducible. The proposed method is constructed under a probabilistic framework, and the resulting classifiers are consistent under mild assumptions. They can outperform the usual nearest neighbor classifier, the weighted nearest neighbor classifier and various data balancing algorithms in a wide variety of classification problems. Analyzing several simulated and real data sets, we have amply demonstrated this in this article.

\appendix
\section*{Appendix: Additional Results}
Here we provide a summary of the performance of all the classifiers considered in our numerical studies. For this summary, we consider the $F_1$-score of the classifiers and rank these values in each of the examples. Boxplots of these ranks are shown in Figure~\ref{fig:ranks_all}, separately for the two-class and the multi-class data sets. Note that a smaller rank indicates better performance by the classifier.

\begin{figure}[h!]
\centering
\small
\begin{tabular}{c}
(a) Two-class Problems \\ [2pt]
\includegraphics[height=2.5in]{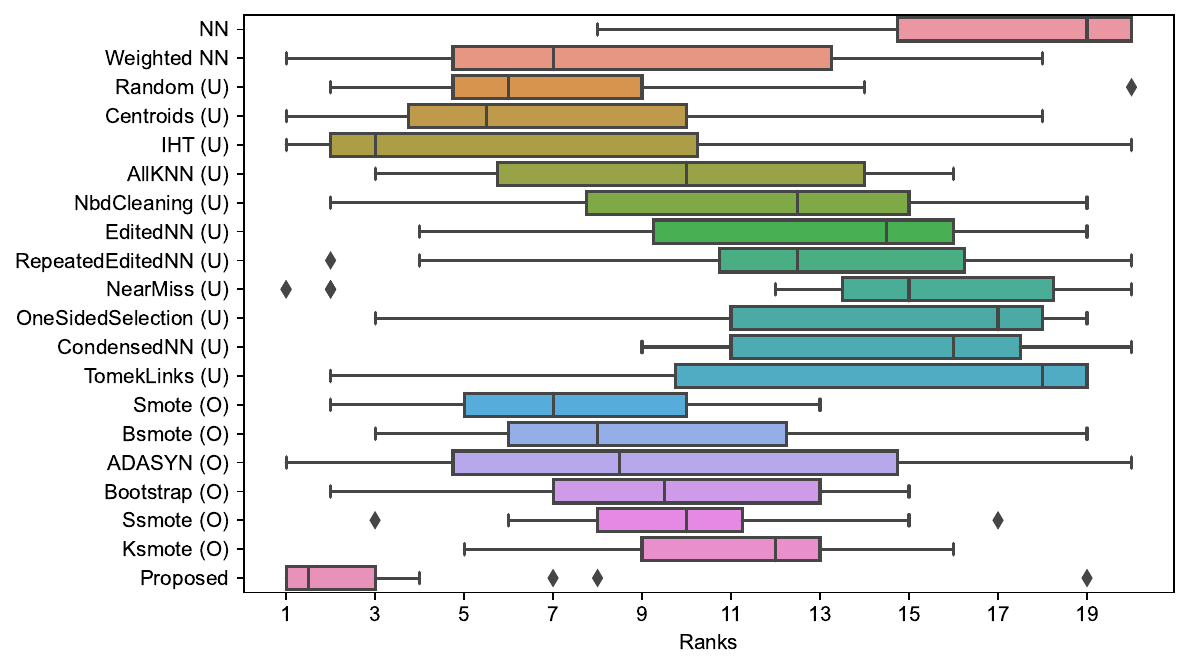} \\ [3pt]
(b) Multi-class Problems \\ [2pt]
\includegraphics[height=2.5in]{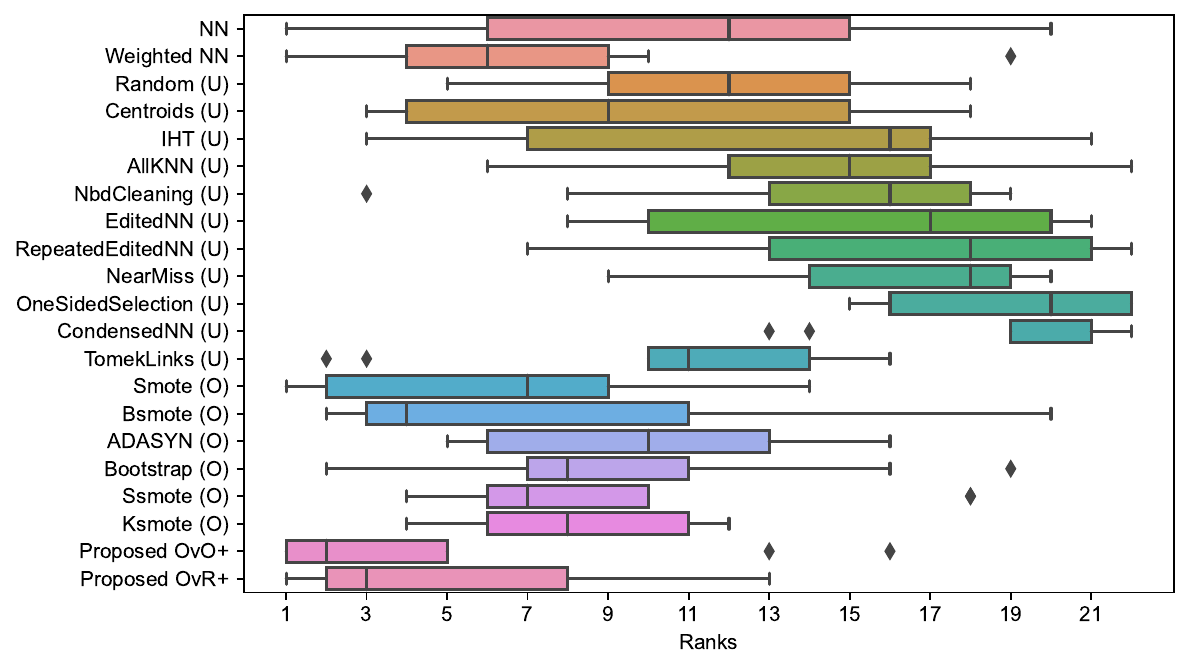}
\end{tabular}
\caption{Boxplots of ranks of $F_1$-scores for different classifiers in (a) the two-class problems and (b) the multi-class problems. The undersampling methods are indicated by (U) and the oversampling methods are indicated by (O).\label{fig:ranks_all}}
\end{figure}

From Figure~\ref{fig:ranks_all}, we can see that in the two-class problems, among the undersampling methods, random undersampling, Centroids and IHT have the best overall performance. Among the oversampling methods, Smote, Bsmote and ADASYN perform much better than the others. This is the reason for reporting the detailed results for only these methods in this paper. In the multi-class problem, random undersampling and Centroids are the best among the undersampling methods, and Smote and Bsmote are the best among the oversampling methods. But, the performance of IHT and ADASYN are much worse compared to the two-sample problems. However, in all the examples, the performance of our proposed methods is very impressive as can be seen from the figure. Specifically, for the two-class problems, our proposed method is among the best three methods (out of 20) most of the times. In the multi-class problems, the proposed OvO$+$ method is among the best five methods (out of 21) most of the times. The proposed OvR$+$ method has a slightly inconsistent performance. But, both these methods are clearly much better than the 19 other methods considered for comparison in our numerical studies.

\bibliographystyle{rss}
\bibliography{reference}
\end{document}